\newcommand{\qedo}{\hfill \ensuremath{\Box}}
\def\M{\mathbb M}
\def\R{\mathbb R}
\newtheorem{prop}{Proposition}
\newtheorem{theorem}{Theorem}
\newtheorem{lemma}{Lemma}
\theoremstyle{remark}
\newtheorem{remark}[prop]{Remark}
\begin{document}
\sloppy


\title{Threshold estimation based on a $p$-value framework in dose-response and regression settings}

\author{A. Mallik }
\address{Department of Statistics, University of Michigan, Ann Arbor, MI 48109}
\email{atulm@umich.edu} 
\author{B. Sen }
\address{Department of Statistics, Columbia University, New York, NY 10027}
\email{bodhi@stat.columbia.edu}
\author{M. Banerjee}
\address{Department of Statistics, University of Michigan, Ann Arbor, MI 48109}
\email{moulib@umich.edu} 
\author{G. Michailidis}
\address{Department of Statistics, University of Michigan, Ann Arbor, MI 48109}
\email{gmichail@umich.edu} 
%

\maketitle
\begin{abstract}We use $p$-values to identify the threshold level at which a regression function takes off from its baseline value, a problem motivated by applications in toxicological and pharmacological dose-response studies and environmental statistics. We study the problem in two sampling settings: one where multiple responses can be obtained at a number of different covariate-levels and the other the standard regression setting involving limited number of response values at each covariate. Our procedure involves testing the hypothesis that the regression function is at its baseline at each covariate value and then computing the potentially approximate $p$-value of the test. An estimate of the threshold is obtained by fitting a piecewise constant function with a single jump discontinuity, otherwise known as a stump, to these observed $p$-values, as they behave in markedly different ways on the two sides of the threshold. The estimate is shown to be consistent and its finite sample properties are studied through simulations. Our approach is computationally simple and extends to the estimation of the baseline value of the regression function, heteroscedastic errors and to time-series. It is illustrated on some real data applications.
\end{abstract}

\markboth{{\sc A. Mallik, B. Sen,  M. Banerjee \and G. Michailidis}}{{\sc Threshold estimation}}


\section{Introduction}
In a number of applications, the data follow a regression model where the regression function $\mu$ is constant at its baseline value $\tau_0$ up to a certain covariate threshold $d^0$ and deviates significantly from $\tau_0$ at higher covariate levels. For example, consider the data shown in the left panel of Fig. \ref{full-data}. 
It depicts the physiological response of cells from the IPC-81 leukemia rat cell line to a treatment, at different doses; more details are given in Section \ref{data-application}. The objective here is to study the toxicity in the cell culture to assess environmental hazards. The function stays at its baseline value for high dose levels which corresponds to the dose becoming lethal, and then takes off for lower doses, showing response to treatment. This problem requires procedures that can identify the change-point in the regression function, namely where it deviates from the baseline value. The threshold is of interest as it corresponds to maximum safe dose level beyond which cell cultures stop responding. Similar problems also arise in other toxicological applications \citep{C87}. 

\begin{figure}
\includegraphics[scale=.45]{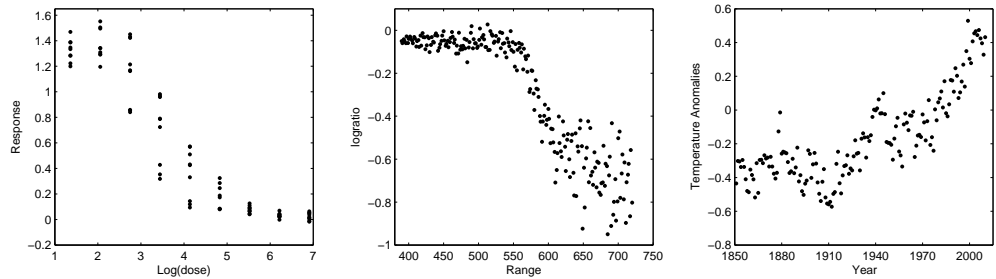} 
\caption{The three data examples. Left panel: Response of cell-cultures at different doses. Middle panel: Logratio measurements over range. Right panel: Annual global temperature anomalies  from 1850 to 2009. \label{full-data} }  
\end{figure}


Problems with similar structure also arise in other pharmacological dose-response studies, where $\mu(x)$ quantifies the response at dose-level $x$ and is typically at the baseline value up to a certain dose, known as the minimum effective dose; see \citet{CC07} and \citet{TL02} and the references therein. In such applications, the number of doses or covariate levels is relatively small, say up to 20, and many procedures proposed in the literature are based on testing ideas \citep{TL02, HB99}. However, in other application domains, the number of doses can be fairly large compared to the number of replicates at each dose. The latter is effectively the setting of a standard regression model. In the extreme case, there is a single observation per covariate level. Data from such a setting are shown in the middle panel of Fig. \ref{full-data}, depicting the outcome of a light detection and ranging experiment, used to detect the change in the level of atmospheric pollutants.  This technique uses the reflection of laser-emitted light to detect chemical compounds in the atmosphere \citep{H96, R97}. The predictor variable, range, is the distance traveled before the light is reflected back to its source, while the response variable, logratio, is the logarithm of the ratio of received light at two different frequencies. The negative of the slope of the underlying regression function is proportional to mercury concentration at any given value of range. The point at which the function falls from its baseline level corresponds to an emission plume containing mercury and, thus, is of interest.
An important difference between these two examples is that the former provides the luxury of multiple observations at each covariate level, while the latter does not. 

Another relevant application in a time-series context is given in the right panel of Fig. \ref{full-data}, where annual global temperature anomalies are reported from 1850 to 2009. The study of such anomalies, temperature deviations from a base value, has received much attention in the context of global warming from both the scientific as well as the general community \citep{M99, DK00}. The figure suggests an initial flat stretch followed by a rise in the function. Detecting the advent of global warming, which is the threshold, is of interest here. While we take advantage of the independence of errors in the previous two datasets, this application has an additional feature of short range dependence which needs to be addressed appropriately.


Formally, we consider a function $\mu(x)$ on $[0,1]$ with the property that $\mu(x) = \tau_0$ for $x \leq d^0$ and $\mu(x) > \tau_0$ for $x > d^0$ for some $d^0 \in (0,1)$.  As already mentioned, quantities of prime interest are $d^0$ and $\tau_0$ that need to be estimated from realizations of the model $Y = \mu(X) +  \epsilon$. We call $d^0$ the $\tau_0$ threshold of the function $\mu$. Here $\tau_0$ is the global minimum for the function $\mu$. To fix ideas, we work only with this setting in mind. The methods proposed can be easily imitated for the first data application where the baseline stretch is on the right as well as for the second data application where $\tau_0$ is the maximum.

In this generality, i.e., without any assumptions on the behavior of the function in a neighborhood of $d^0$, the estimation of the threshold $d^0$ has not been extensively addressed in the literature. 
In the simplest possible setting of the problem posited, $\mu$ has a jump discontinuity at $d^0$. In this case, $d^0$ corresponds to a change-point for $\mu$ and the problem reduces to estimating this change-point. Such models are well studied; see  \citet{M92}, \citet{L96},  \citet{KQ02}, \citet{P03}, \citet{L09}, \citet{P09} and the references therein. Results on estimating a change-point in a density can be found in \citet{IK81}. 

The problem becomes significantly harder when $\mu$ is continuous at $d^0$; in particular, the smoother $\mu$ is in a neighborhood of $d^0$, the more challenging the estimation. If $d^0$ is a cusp of $\mu$ of some \emph{known} order $p$, i.e., the first $p-1$ right derivatives of $\mu$ at $d^0$ equal 0 but the $p$-th does not, so that $d^0$ is a change-point in the $p$-th derivative, one can obtain nonparametric estimates for $d^0$ using either kernel based \citep{M92} or wavelet based \citep{R98} methods. If the degree of differentiability of $\mu$ at $d^0$ \emph{is not known}, this becomes an even harder problem. {In fact, it was pointed out to us by one of the referees that if $p$ is unknown then there is no method for which the estimate, $\hat{d}$, will be uniformly consistent, i.e., for any $\epsilon >0$,  $\sup_\mu {P}_\mu \{ | \hat{d} - d^0| > \epsilon \} \rightarrow 0 $. Here, the supremum is taken over all choices of $\mu$ with a $\tau_0$ threshold at $d^0$.}

This paper develops a novel approach for the {\em consistent estimation} of $d^0$ in situations where \emph{single or multiple observations can be sampled} at a given covariate value. The developed nonparametric methodology relies on testing for the value of $\mu$ at the design values of the covariate. The obtained test statistics are then used to construct $p$-values which, under mild assumptions on $\mu$, behave in markedly different  manner on either side of the threshold $d^0$ and it is this discrepancy that is used to construct an estimate of $d^0$. The approach is computationally simple to implement and does not require knowledge of the smoothness of $\mu$ at $d^0$. In a dose-response setting involving several doses and large number of replicates per dose, the $p$-values are constructed using multiple observations at each dose. The approach is completely automated and does not require the selection of any tuning parameter. In the case of limited or even single observation at each covariate value, referred to as the standard regression setting in this paper, the $p$-values are constructed by borrowing information from neighboring covariate values via smoothing which only involves selecting a smoothing bandwidth. The first data application falls under the dose-response setting and the other two examples fall under the standard regression regime. We establish consistency of the proposed procedure in both settings.

An estimate of $\mu$, say $\hat{\mu}$, by itself, fails to offer a satisfactory solution for estimating $d^0$. Naive estimates, using $\hat{\mu }$, may be of the form $ \hat{d}^{(1)} =\sup \{x: \hat{\mu}(x) \leq \tau_0 \}$ or $\hat{d}^{(2)} = \inf \{x: \hat{\mu}(x) > \tau_0 \}$. The estimator $\hat{d}^{(1)}$ performs poorly when $\mu$ is not monotone, and is close to $\tau_0$ at values to the far right of $d^0$, e.g., when $\mu$ is tent-shaped. Also,  $\hat{d}^{(2)}$, by itself, is not consistent and one would typically need to substitute $\tau_0$ with a $\tau_0 + \eta_n$, with $\eta_n \rightarrow 0$ at an appropriate rate, to attain consistency. In contrast, our approach does not need to introduce such exogenous parameters.

\section{Formulation and Methodology}
\subsection{Problem Formulation}\label{sec:p-value-proc}
Consider a regression model $Y = \mu(X) + \epsilon$, where $\mu$ is a function on $[0,1]$ and
\begin{eqnarray}\label{eq:model}
\mu(x) = \tau_0 \ \ (x \leq d^0), \ \mu(x) > \tau_0  \ \ (x > d^0),
\end{eqnarray}
for $d^0 \in (0,1)$, with an unknown $\tau_0 \in \R$. The covariate $X$ is sampled from a Lebesgue density $f$ on $[0,1]$ and $ E(\epsilon \mid X=x ) = 0$, $\sigma^2(x) = \mbox{var}(\epsilon \mid X=x ) >0 $ for $x \in [0,1]$. We assume that $f$ is continuous and positive on $[0, 1]$ and $\mu$ is continuous. {\it No further assumptions are made on the behavior of $\mu$, especially around $d^0$}. We have the following realizations:
\begin{eqnarray}
\label{basic-reg-model}
Y_{ij} = \mu(X_{i}) + \epsilon_{ij} \ \ (i = 1,\ldots,n;\  j = 1,\ldots,m),
\end{eqnarray}
with $N = m \times n$ being the total budget of samples. The $\epsilon_{ij}$s are independent given $X$ and distributed like $\epsilon$ and the $X_i$s are independent realizations from $f$. Also, \eqref{basic-reg-model} with $m = 1$ corresponds to the usual regression setting which simply has only one response at each covariate level.

We construct {\it consistent} estimates of $d^0$ under dose-response and standard regression settings. In the dose-response setting, we allow both $m$ and $n$ to be large and construct $p$-values accordingly. We refer to the corresponding approach as Method 1 from now on. 
In the other setting, we consider the case when $m$ is much smaller compared to $n$ and extend our approach through smoothing. We refer to this extension as Method 2, which requires choosing a smoothing bandwidth. The two methods rely on the same dichotomous behavior exhibited by the approximate $p$-values, although constructed differently.

\subsection{Dose-Response Setting (Method 1)}\label{m1}
We start by introducing some notation. Let $\bar{Y}_{i\cdot} = \sum_{i=1}^m Y_{i j}/m$ and $x \in (0,1)$ denote a generic value of the covariate. Let $\hat{\sigma}_{m,n} \equiv \hat \sigma$ and $\hat{\tau}_{m,n}  \equiv \hat \tau$ denote the estimators of $\sigma(\cdot)$ and $\tau_0$ respectively. For homoscedastic errors, $\hat{\sigma}_{m, n}(\cdot)$ is the  standard pooled estimate, i.e., $\hat{\sigma}_{m, n}^2(x) \equiv  \sum_{i, j} (Y_{i j} - \bar{Y}_{i \cdot})^2/(nm - m)$, while for the heteroscedastic case $\hat{\sigma}_{m, n}^2(X_i) = \sum_{j=1}^m (Y_{ij} - \bar{Y}_{i \cdot})^2/(m-1)$. Estimators of $\tau_0$ are discussed in Section \ref{estntau}. We seek to estimate $d^0$ by constructing $p$-values for testing the null hypothesis $H_{0,x}: \mu(x) = \tau_0$ against the alternative $H_{1,x}: \mu(x) > \tau_0$ at each dose $X_i = x$. The approximate $p$-values are 
\begin{equation}
{p}_{m,n} (X_i) = {p}_{m,n} (X_i, \hat{\tau}_{m,n}) = 1 - \Phi\{ m^{1/2}(\bar Y_{i\cdot} - \hat{\tau})/ \hat{\sigma}(X_i) \}. \nonumber
\end{equation}
Indeed, these approximate $p$-values would correspond to the exact $p$-values for the uniformly most powerful test if we worked with a known $\sigma$, a known $\tau$ and normal errors.

To the left of $d^0$, the null hypothesis holds and these approximate $p$-values converge weakly to a Uniform(0,1) distribution, for suitable estimators of $\tau_0$. In fact, the distribution of $p_{m,n} (X_i)$s does not even depend on $X_i$ when $X_i \leq d^0$. Moreover, to the right of $d^0$, where the alternative is true, the $p$-values converge in probability to $0$. This dichotomous behavior of the  $p$-values on either side of $d^0$ can be used to prescribe \emph{consistent} estimates of the latter. We can fit a stump, a piecewise constant function with a single jump discontinuity, to the $p_{m,n}(X_i)$s, $i = 1, \ldots, n$, with levels 1/2, which is the mean of a Uniform (0,1) random variable, and 0 on either side of the break-point and prescribe the break-point of the best fitting stump (in the sense of least squares) as an estimate of $d^0$. Formally, we fit a stump of the form $\xi_d(x) = (1/2) 1 (x \leq d)$, minimizing
\begin{eqnarray}\label{eq:CPSSEv1}
\tilde{\mathbb{M}}_{m, n}(d) = \tilde{\mathbb{M}}_{m,n}(d, \hat{\tau}) =
\sum_{i: X_i \le d} \left\{ p_{m,n}(X_i) - {1}/{2} \right\}^2 +  \sum_{i: X_i > d} \left\{ p_{m,n}(X_i)\right\}^2
\end{eqnarray}
over $d \in [0,1]$. Let $\hat d_{m,n} = \arg \min_{d \in [0,1]} \tilde{\mathbb{M}}_{m,n}(d)$.
The success of our method relies on the fact that the $p_{m,n}(X_i)$s eventually show stump like dichotomous behavior. In this context, no estimate of $\mu$ could exhibit such a behavior directly. Our procedure can be thought of as fitting  the limiting stump model to the observed $p_{m,n}(X_i)$s by minimizing an $L_2$ norm. In fact, the expression in
 \eqref{eq:CPSSEv1} can be simplified, and it can be seen that $\hat d_{m,n} = \arg \max_{d \in [0,1]} {\mathbb{M}}_{m,n}(d)$, where  
 ${\mathbb{M}}_{m, n}(d) = {n}^{-1} \sum_{i: X_i \le d} \left\{ p_{m,n}(X_i) - {1}/{4} \right\}.$
 The estimate can be computed easily via a simple search algorithm as it is one of the order statistics.

In heteroscedastic models, the estimation of the error variance $\hat{\sigma}(\cdot)$ can often be tricky. The proposed procedure can be modified to {\it avoid} the estimation of the error variance altogether for the construction of the $p$-values, as the desired dichotomous behavior of the $p$-values is preserved even when we do not normalize by the estimate of the variance. Thus, we can consider the modified $p$-values $\tilde{p}_{m,n} (X_i) = 1 - \Phi\{m^{1/2}(\bar Y_{i\cdot} - \hat{\tau})\}$ and the dichotomy continues to be preserved as $E \{1 - \Phi(Z)\} = 0.5$ for a normally distributed $Z$ with zero mean and arbitrary variance. In practice though, we recommend, whenever possible, using the normalized $p$-values as they exhibit good finite sample performance.

Next, we prove the consistency of our proposed procedure when using the unnormalized $p$-values. The technique illustrated here can be carried forward to prove consistency for other variants of the procedure, e.g., when normalizing by the estimate of the error variance, but require individual attention depending upon the assumption of heteroscedasticity/homoscedasticity.
\begin{theorem}
\label{consistency-theorem}
Consider the dose-response setting of the problem  and let $\hat{d}_{m,n}$ denote the estimator based on the non-normalized version of $p$-values, e.g., $\tilde{p}_{m,n} (X_i) = 1 - \Phi \{m^{1/2}(\bar Y_{i\cdot} - \hat{\tau}) \}$. Assume that $m^{1/2}(\hat{\tau}_{m,n} - \tau_0) = o_p(1) \mbox{ as} \ \ m,n \rightarrow \infty,$ i.e., given $\epsilon, \eta > 0$, there exists a positive integer $L$, such that for $m,n \geq L$, $P (m^{1/2}| \hat{\tau}- \tau_0 | > \epsilon ) < \eta$. Then,
$\hat{d}_{m,n} - d^0 = o_p(1)\ \ \mbox{as} \ \ m,n \rightarrow \infty.$
\end{theorem}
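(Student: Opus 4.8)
The plan is to establish consistency by the classical argmax route: show that the criterion $\mathbb{M}_{m,n}(\cdot)$ converges uniformly in probability to a deterministic limit that is maximized uniquely and in a well-separated manner at $d^0$, then invoke the argmax theorem. Recalling that here $\mathbb{M}_{m,n}(d) = n^{-1}\sum_{i: X_i \le d}\{\tilde p_{m,n}(X_i) - 1/4\}$, the candidate limit is found by replacing each $p$-value by its limiting conditional mean. Set $g(x) = (1/2)\,1(x \le d^0)$. To the left of $d^0$ the null holds and $m^{1/2}(\bar Y_{i\cdot} - \tau_0)$ is asymptotically normal, so $\tilde p_{m,n}$ is asymptotically uniform with mean $1/2$; to the right, $m^{1/2}(\bar Y_{i\cdot} - \tau_0) \to +\infty$ and $\tilde p_{m,n} \to 0$. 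One therefore expects
\begin{equation}
\mathbb{M}_{m,n}(d) \longrightarrow \mathbb{M}(d) := E\big[\,1(X \le d)\{g(X) - 1/4\}\,\big] = \int_0^d \{g(x) - 1/4\}\,f(x)\,dx .
\end{equation}
Since $g - 1/4$ equals $+1/4$ on $(0,d^0)$ and $-1/4$ on $(d^0,1)$ while $f$ is strictly positive, $\mathbb{M}$ is strictly increasing on $[0,d^0]$ and strictly decreasing on $[d^0,1]$; hence $d^0$ is its unique maximizer and $\sup_{|d-d^0|\ge \epsilon}\mathbb{M}(d) < \mathbb{M}(d^0)$ for each $\epsilon>0$, the well-separation required.

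The substance of the argument, and its main obstacle, is the uniform convergence $\sup_d |\mathbb{M}_{m,n}(d) - \mathbb{M}(d)| = o_p(1)$, complicated both by the plug-in $\hat\tau$, which couples the summands, and by the triangular-array nature of the $p$-values, whose conditional means approach $g$ pointwise but not uniformly just to the right of $d^0$. I would control the supremum by a three-term decomposition. First, pass from $\hat\tau$ to $\tau_0$: by the mean value theorem and $\sup_t \Phi'(t) = (2\pi)^{-1/2}$, each $p$-value moves by at most $(2\pi)^{-1/2} m^{1/2}|\hat\tau - \tau_0|$, so $\sup_d|\mathbb{M}_{m,n}(d,\hat\tau) - \mathbb{M}_{m,n}(d,\tau_0)| \le (2\pi)^{-1/2} m^{1/2}|\hat\tau - \tau_0| = o_p(1)$; this is exactly where the assumed $m^{1/2}$-rate on $\hat\tau$ enters. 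Second, with $\tau_0$ fixed, center each $p$-value at its conditional mean $q_{m,n}(X_i) = E\{\tilde p_{m,n}(X_i,\tau_0) \mid X_i\}$: conditionally on $X_1,\dots,X_n$ the centered summands are independent, mean zero and bounded, so $d \mapsto n^{-1}\sum_{X_i \le d}\{\tilde p_{m,n}(X_i,\tau_0) - q_{m,n}(X_i)\}$ is a partial-sum (martingale) process whose supremum equals a maximum of partial sums, and Doob's maximal inequality bounds it by $O_p(n^{-1/2})$. Third, treat $n^{-1}\sum_{X_i \le d}\{q_{m,n}(X_i) - 1/4\}$ by subtracting $g$: the piece built from $g$ is a classical Glivenko--Cantelli average converging uniformly to $\mathbb{M}$, while the remainder is dominated by the nonnegative average $n^{-1}\sum_i |q_{m,n}(X_i) - g(X_i)|$, whose mean $E|q_{m,n}(X) - g(X)| \to 0$ by dominated convergence (pointwise convergence holding off the single point $d^0$), so Markov's inequality renders it $o_p(1)$.

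Summing the three bounds gives $\sup_d|\mathbb{M}_{m,n}(d) - \mathbb{M}(d)| = o_p(1)$. Together with the well-separated maximum of $\mathbb{M}$ at $d^0$, the argmax theorem (e.g.\ van der Vaart, \emph{Asymptotic Statistics}, Theorem 5.7) yields $\hat d_{m,n} \to d^0$ in probability, that is, $\hat d_{m,n} - d^0 = o_p(1)$. The delicate point is precisely the non-uniform convergence of $q_{m,n}$ to $g$ immediately to the right of $d^0$, where $\mu(x) - \tau_0$ is small; working with a fixed $\mu$ and invoking dominated convergence and Markov's inequality — rather than any uniform-in-$x$ rate — is what makes the third term tractable, in keeping with the referee's observation that no rate uniform over $\mu$ can hold.
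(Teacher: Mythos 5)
Your proof is correct, and while it follows the same overall argmax skeleton as the paper --- same limit criterion $M(d)=\int_0^d\{\nu(x)-1/4\}f(x)\,dx$ with $\nu=(1/2)1(\cdot\le d^0)$, and an essentially identical first step for the plug-in $\hat\tau$ (your mean-value bound $(2\pi)^{-1/2}m^{1/2}|\hat\tau-\tau_0|$ versus the paper's exact bound $2\Phi(m^{1/2}|\hat\tau-\tau_0|/2)-1$; both are $o_p(1)$ under the hypothesis) --- the central uniform-convergence step is handled by a genuinely different device. The paper treats $\{(X_i,Z_{im}(\tau_0))\}$ as a triangular array with laws $P_m$, invokes Theorem 2.8.1 of van der Vaart and Wellner to show the class $\{(z-1/4)1(x\le d):d\in[0,1]\}$ is Glivenko--Cantelli \emph{uniformly over the family} $\{P_m\}$, and then routes the argument through the intermediate deterministic criteria $M_m$ and their maximizers $d_m\to d^0$, supported by its auxiliary uniform argmax theorem. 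You instead center each $p$-value at its conditional mean $\nu_m(X_i)$ and split: the fluctuation term is, conditionally on the design, a maximum of bounded mean-zero partial sums, so Kolmogorov's (Doob's) maximal inequality gives an $O_p(n^{-1/2})$ bound that is automatically uniform in $m$; the bias term is dominated by $n^{-1}\sum_i|\nu_m(X_i)-\nu(X_i)|$, killed by dominated convergence plus Markov, uniformly in $n$; and the remaining $\nu$-average reduces to the classical Glivenko--Cantelli theorem for the empirical distribution function. This lets you compare directly with $M$ and dispenses with both the uniform-in-$P_m$ empirical-process machinery and the detour through $d_m$, while making the joint limit $m,n\to\infty$ transparent (each error term vanishes uniformly in the other index) --- a more elementary argument; what the paper's route buys is reusability of the uniform GC lemma and the $\sup_{l\ge n}$ control for other variants of the procedure. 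One cosmetic slip: with unnormalized $p$-values, $\tilde p_{m,n}(X_i,\tau_0)$ is not asymptotically Uniform$(0,1)$ to the left of $d^0$ (its limit is $1-\Phi(Z)$ with $Z\sim N(0,\sigma^2(x))$), but its mean still tends to $1/2$ by symmetry, which is all your argument actually uses.
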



\subsection{Standard Regression Setting (Method 2)}\label{m2}
We now consider the case when $m$ is much smaller than $n$. 
Let $\hat{\mu}(x) = \hat{r}(x) / \hat{f}(x)$ denote the Nadaraya--Watson estimator, where $ \hat{r}(x) = {(n h_n)^{-1} }{\sum_{i=1}^n {\bar{Y}_{i\cdot} K\left\{ h_n^{-1}({x- X_i})\right\}}}$ and $\hat{f}(x) = {(n h_n)^{-1} }{\sum_{i=1}^n K\left\{ {h_n}^{-1}({x- X_i})\right\}}$, with $K$ being a symmetric probability density or simply a kernel and $h_n$ the smoothing bandwidth. We take $h_n= c n^{-\beta}$ for $\beta \in (0, 1)$. Let $\hat{\sigma}_n(\cdot)$ and $\hat{\tau}_n$ denote estimators of ${\sigma}(\cdot)$ and $\tau_0$ respectively. An estimate of ${\sigma}^2(\cdot)$  can be constructed through standard techniques, e.g., smoothing or averaging the squared residuals $m \{\bar{Y}_{i \cdot} - \hat{\mu}(X_i)\}^2$, depending upon the assumption of heteroscedastic or homoscedastic errors.

For $x < d^0$, the statistic ${T}(x, \tau_0) = (n h_n)^{1/2}(\hat{\mu}(x) - \tau_0)$ converges to a normal distribution with zero mean and variance $V^2(x) = \sigma^2(x) \bar{K}^2/ \{m f(x)\}$ with $\bar{K}^2 = \int K^2(u) du$. The approximate $p$-value for testing $H_{0,x}$ against $H_{1,x}$ can then be constructed as:
\begin{equation}
{p}_n(x) = {p}_n(x, \hat{\tau}_n) = 1 - \Phi\left\{{T(x, \hat{\tau}_n)}/{\hat{V}_n(x)}\right\}, \nonumber
\end{equation}
where $\hat{V}^2_n(x) = \hat{\sigma}^2_n(x) \bar{K}^2/ \{m \hat{f}(x)\}$. It can be seen that these $p$-values also exhibit the desired dichotomous behavior. Finally, an estimate of $d^0$ is obtained by maximizing
\begin{eqnarray}\label{eq:CPSSE2}
{\mathbb{M}}_{n}(d) = ({1}/{n}) \sum_{i: X_i \le d} \left\{ p_n(X_i) - {1}/{4} \right\}
\end{eqnarray}
over $d \in [0,1]$. Let  $\hat d_{n} = \arg \max_{d \in [0,1]} {\mathbb{M}}_{n}(d)$.
Under suitable conditions on $\hat{\tau}_n$, this estimator can be shown to be consistent when $n$ grows large.


We have avoided sophisticated means of estimating $\mu(\cdot)$, as our focus is on estimation of $d^0$, and not particularly on efficient estimation of the regression function. Also, the Nadaraya--Watson estimate does not add substantially to the computational complexity of the problem and provides a reasonably rich class of estimators through choices of bandwidths and kernels.

In many applications, particularly when $m=1$ and under heteroscedastic errors, estimating the variance function  $\sigma^2(\cdot)$ accurately could be cumbersome. As with Method 1, Method 2 can also be modified to avoid estimating the error variance, e.g., the estimator constructed using \eqref{eq:CPSSE2}, based on $\tilde{p}_n (X_i)$s, with $\tilde{p}_n (x) = 1 - \Phi\left\{(n h_n)^{1/2}(\hat{\mu}(x) - \hat{\tau}_n)\right\}$.
Next, we prove consistency for the proposed procedure \emph{when we do not normalize by the estimate of the variance.} The technique illustrated here can be carried forward to prove consistency for other variants of the procedure. 
We make the following additional assumptions.
\begin{enumerate}
	\item[(a)] For some $\eta > 0$, the functions $\sigma^2(\cdot)$ and $\sigma^{(2 + \eta)}(x) \equiv E ( |\epsilon|^{2+ \eta}  \mid X = x)$, $x \in[0, 1]$, are continuous.
	 	\item[(b)] The kernel $K$ is either compactly supported or has exponentially decaying tails, i.e., for some $C$, $D$ and $a >0 $, and for all sufficiently large $x$, $ P\{ |W| > x \} \leq C \exp ( - D x^{a} )$, where $W$ has density $K$. Also,	$\bar{K^2} = \int{K^2(u) du} < \infty$.
\end{enumerate}
Assumption (a) is very common in non-parametric regression settings for justifying asymptotic normality of kernel based estimators. Also, the popularly used kernels, namely uniform, Gaussian and Epanechnikov, do satisfy assumption (b).
\begin{theorem}
\label{consistency-theorem2}
Consider the standard regression setting of the problem with $m$ staying fixed and $n \rightarrow \infty$. Assume that $(n h_n)^{1/2}(\hat{\tau}_n - \tau_0) = o_p(1)$ as $n \rightarrow \infty$. Let $\hat{d}_{n} $ denote the estimator computed using $\tilde{p}_n(X_i)= 1 - \Phi\{T(X_i, \hat{\tau}_n)\}$. Then, $\hat{d}_{n} - d^0 = o_p(1) \ \ \mbox{as} \ n \rightarrow \infty.$
\end{theorem}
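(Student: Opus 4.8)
The plan is to establish consistency by the standard argmax (M\nobreakdash-estimation) route: exhibit a deterministic limit criterion $M$ on $[0,1]$ that is uniquely and well-separatedly maximized at $d^0$, show that the random criterion $\mathbb{M}_n(d)=n^{-1}\sum_{i:X_i\le d}\{\tilde p_n(X_i)-1/4\}$ converges to $M$ uniformly in probability, and then invoke the argmax continuous-mapping principle to conclude $\hat d_n-d^0=o_p(1)$. A preliminary reduction removes the unknown $\tau_0$: since $\Phi$ is globally Lipschitz and $T(x,\hat\tau_n)=T(x,\tau_0)-(nh_n)^{1/2}(\hat\tau_n-\tau_0)$, the shift is uniform in $x$ and $o_p(1)$ by hypothesis, whence $\sup_x|\tilde p_n(x,\hat\tau_n)-\tilde p_n(x,\tau_0)|=o_p(1)$ and therefore $\sup_d|\mathbb{M}_n(d,\hat\tau_n)-\mathbb{M}_n(d,\tau_0)|=o_p(1)$. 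I henceforth work with $\tau_0$ known.

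To identify the limit, note that for fixed $x<d^0$ the smoothing window eventually lies in the baseline region where $\mu\equiv\tau_0$, so $\hat\mu(x)$ carries negligible bias and $T(x,\tau_0)$ is asymptotically $N\{0,V^2(x)\}$ by the usual Nadaraya--Watson central limit theorem under assumptions (a)--(b) (recall $nh_n=cn^{1-\beta}\to\infty$). Since $E\{1-\Phi(Z)\}=1/2$ for any mean-zero Gaussian $Z$ and $\tilde p_n$ is bounded, $E\tilde p_n(x)\to 1/2$. For $x>d^0$, $\hat\mu(x)\to\mu(x)>\tau_0$ and $(nh_n)^{1/2}\to\infty$ force $T(x,\tau_0)\to+\infty$, so $\tilde p_n(x)\to^p 0$. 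Setting $\psi(x)=(1/2)\,1(x<d^0)$ and $M(d)=\int_0^d\{\psi(x)-1/4\}f(x)\,dx$, the integrand equals $+1/4$ on $(0,d^0)$ and $-1/4$ on $(d^0,1)$ while $f>0$, so $M$ is strictly increasing up to $d^0$ and strictly decreasing thereafter; hence $d^0$ is the unique maximizer, and continuity gives $\sup_{|d-d^0|\ge\epsilon}M(d)<M(d^0)$ for every $\epsilon>0$.

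For pointwise convergence I would show $E\mathbb{M}_n(d)\to M(d)$ by exchangeability of the design together with dominated convergence, the transition zone $\{|x-d^0|\le h_n\}$ having Lebesgue measure $O(h_n)\to 0$ and hence contributing negligibly. The variance is handled by conditioning on the design and exploiting the \emph{local} dependence of the smoother: for a compactly supported kernel, $\tilde p_n(X_i)$ and $\tilde p_n(X_j)$ become conditionally independent once $|X_i-X_j|$ exceeds twice the bandwidth radius, so only $O(n^2h_n)$ pairs contribute a nonzero covariance; bounding each by a constant (the $\tilde p_n$ lie in $[0,1]$) gives $\mathrm{Var}\,\mathbb{M}_n(d)=O(h_n)+O(n^{-1})\to 0$, and the exponentially decaying case in (b) is reduced to this one by truncating the kernel with negligible error. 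Together these yield $\mathbb{M}_n(d)\to^p M(d)$ for each fixed $d$.

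Finally I would upgrade to uniform convergence by writing $\tilde p_n(X_i)-1/4$ as the difference of its positive and negative parts, so that $\mathbb{M}_n$ is the difference of two nondecreasing processes, each converging pointwise in probability to a continuous nondecreasing limit; a stochastic version of Polya's theorem then delivers $\sup_d|\mathbb{M}_n(d)-M(d)|=o_p(1)$, and the argmax theorem completes the proof. The main obstacle is the variance step: controlling the covariance structure that kernel smoothing induces among the $p$-values at neighbouring design points, under only the mild moment and tail conditions (a)--(b), while simultaneously verifying that the asymptotic normality on the left and the divergence on the right are strong enough, and that the transition zone around $d^0$ is asymptotically negligible.
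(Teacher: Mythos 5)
Your overall skeleton is exactly the paper's: reduce to known $\tau_0$ via the uniform bound $\sup_x|\tilde p_n(x,\hat\tau_n)-\tilde p_n(x,\tau_0)|\le 2\Phi\{(nh_n)^{1/2}|\hat\tau_n-\tau_0|/2\}-1=o_p(1)$, identify the limit criterion $M(d)=\int_0^d\{\nu(x)-1/4\}f(x)\,dx$ with unique well-separated maximizer $d^0$, prove pointwise mean convergence (the paper likewise isolates the negligible self-weighting term in the Nadaraya--Watson average), kill the variance, upgrade to uniformity, and finish by the argmax continuous mapping theorem. Where you genuinely differ is in the two technical steps. For uniformity, the paper bounds $E[|\{\mathbb{M}_n(d)-\mathbb{M}_n(d_1)\}\{\mathbb{M}_n(d_2)-\mathbb{M}_n(d)\}|]$ and invokes tightness in $D[0,1]$ (Billingsley, Theorem 15.6) plus degeneracy of the limit; your decomposition of $\mathbb{M}_n$ into a difference of monotone processes converging pointwise to continuous monotone limits, followed by a stochastic P\'olya argument, is a valid and arguably more elementary substitute (the pointwise limits of the positive and negative parts exist by the same distributional convergence of $\tilde p_n(x)$, since $u\mapsto(u-1/4)^{\pm}$ are bounded and continuous).

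The variance step, however, has a real gap as written. Your claim that only $O(n^2h_n)$ pairs contribute nonzero covariance is true only \emph{conditionally on the design}: when the kernel windows of $X_i$ and $X_j$ are disjoint, $\tilde p_n(X_i)$ and $\tilde p_n(X_j)$ are functions of disjoint sets of errors and are conditionally independent, which controls $E[\mathrm{var}\{\mathbb{M}_n(d)\mid X\}]=O(h_n)+O(n^{-1})$. But by the law of total variance you must also control $\mathrm{var}[E\{\mathbb{M}_n(d)\mid X\}]$, the design-driven fluctuation of the conditional means $E\{\tilde p_n(X_i)\mid X\}$, which your accounting omits; unconditionally, \emph{every} pair has nonvanishing covariance through the shared design. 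This term is not automatically negligible: a crude Efron--Stein bound (each summand bounded, $O(nh_n)$ summands affected by perturbing one design point) gives only $O(nh_n^2)$, which fails for $\beta\le 1/2$; one needs the Lipschitz refinement that perturbing one observation moves $T(X_i,\tau_0)$ by $O\{(nh_n)^{-1/2}\}$, which restores an $O(h_n)$ bound. The paper avoids this entirely: by exchangeability, $\mathrm{var}\{\mathbb{M}_n(d)\}$ reduces to $n^{-1}\mathrm{var}(\cdot)$ plus a single generic covariance, which is shown to vanish by conditioning on $(X_1,X_2)=(x,y)$, using the joint asymptotic independence of $T(x,\tau_0)$ and $T(y,\tau_0)$ for $x\ne y$ (Lemma 1) and dominated convergence. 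That route computes the unconditional covariance directly, needs no pair counting, and handles compactly supported and exponentially tailed kernels in one stroke, whereas your argument additionally requires the truncation reduction you sketch. With the conditional-expectation term repaired, your plan goes through; as stated, that step would fail.
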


\begin{remark}
The model in \eqref{basic-reg-model} incorporates the situations with discrete responses. For example, we can consider binary responses with $Y_{ij}$s indicating a reaction to a dose at level $X_i$ . We assume that the function $\mu(x)$, the probability that a subject yields a reaction at dose $x$, is of the form \eqref{eq:model} and takes values in $(0,1)$ so that $\sigma^2(x) = \mu(x) \{1 - \mu(x)\} >0$. The results from this section as well as those from Section~\ref{m1}  will continue to hold for this setting.
\end{remark}

\begin{remark}
Our assumption of continuity of $\mu$ can be dropped and the results from this section as well as those from Section~\ref{m1}  will continue to hold provided that $\mu$ is bounded and continuous almost everywhere with respect to Lebesgue measure. This includes the classical change-point problem where $\mu$ has a jump discontinuity at $d^0$ but is otherwise continuous. 
\end{remark}

\subsection{Estimators of $\tau_0$}\label{estntau}
Suitable estimates of $\tau_0$ are required that satisfy the conditions stated in Theorems~\ref{consistency-theorem} and \ref{consistency-theorem2}.  In a situation where $d^0$ may be safely assumed to be greater than some known positive $\eta$, an estimate of $\tau_0$ can be obtained by taking the average of the response values on the interval $[0,\eta]$. The estimator would be $(nm)^{1/2}$-consistent and would therefore satisfy the required conditions. Such an estimator is seen to be reasonable for most of the data applications that are considered in this paper. In situations when such a solution is not satisfactory, we propose an approach to estimate $\tau_0$ that does not require any background knowledge, once again using $p$-values.

We now construct an explicit estimator $\hat{\tau}$ of $\tau_0$ in the dose-response setting, as required in Theorem \ref{consistency-theorem}, using $p$-values. For convenience, let $Z_{im}(\tau) = p_{m, n}(X_i, \tau) = 1 - \Phi\left\{m^{1/2}(\bar{Y}_{i \cdot} - \tau) / \hat{\sigma}_{m,n}(X_i)\right\}$.
Let $\tau > \tau_0$. As $m$ increases, for $\mu(X_i) < \tau$, $Z_{im}(\tau)$ converges to 1 in probability, while for $\mu(X_i) > \tau$, $Z_{im}(\tau)$ converges to 0 in probability. For any $\tau < \tau_0$, it is easy to see that  $Z_{im}(\tau)$ always converges to 0, whereas when $\tau = \tau_0$, $Z_{im}(\tau)$ converges to 0 for $X_i > d^0$ and  $E \{Z_{im}(\tau)\}$ converges to $1/2$ for $X_i < d^0$. Thus, it is only when $\tau = \tau_0$ that $Z_{im}(\tau)$s are closest to $1/2$ for a substantial number of observations. This suggests a natural estimate of $\tau_0$:
\begin{equation}\label{eq:Tau_mn}
\hat{\tau} \equiv \hat{\tau}_{m,n} = \arg \min_\tau \sum_{i=1}^n \{Z_{im}(\tau)- 1/2\}^2.
\end{equation}
Theorem \ref{tau-cons-theorem} shows that under some mild conditions and homoscedasticity, $m^{1/2}\,(\hat{\tau}_{m,n} - \tau_0)$ is $o_p(1)$, a condition required for Theorem \ref{consistency-theorem}. This proof is given in Supplementary Material 1.
\begin{theorem}
\label{tau-cons-theorem}
Consider the same setup as in Theorem \ref{consistency-theorem}. Assume that the errors are homoscedastic with variance $\sigma^2_0$. Further suppose that the regression function $\mu$ satisfies:
\begin{itemize}
\item[(A)] Given $\eta > 0$, there exists $\epsilon > 0$ such that, for every $\tau > \tau_0$, $\int_{\{x > d^0 : |\mu(x) - \tau| \le \epsilon\}} f(x) dx < \eta.$
\end{itemize}
Also assume that $\phi_m$, the density function of $m^{1/2}\,\overline{\epsilon}_{1.}/\sigma_0$, converges pointwise to $\phi$, the standard normal density.
Then $m^{1/2}\,(\hat{\tau}_{m,n} - \tau_0) = o_p(1)$. 
\end{theorem}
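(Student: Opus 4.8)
The plan is to recast this as a standard M-estimation (argmin) problem after localizing at the $m^{-1/2}$ scale, exploiting the fact that averaging over the $n$ design points drives the random criterion to a \emph{deterministic} limit whose unique minimizer is $\tau_0$; this deterministic limit, rather than a mere quadratic approximation, is what upgrades the usual $O_p(m^{-1/2})$ rate to the $o_p(m^{-1/2})$ claimed. First I would replace $\hat\sigma_{m,n}(\cdot)$ by the true $\sigma_0$: under homoscedasticity the pooled estimator satisfies $\hat\sigma_{m,n}\to_p\sigma_0>0$, and since each summand $\{Z_{im}(\tau)-1/2\}^2$ is bounded by $1/4$ and smooth in $\sigma$, this substitution perturbs the criterion uniformly by $o_p(1)$. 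I would then set $u=m^{1/2}(\tau-\tau_0)$ and write $\bar Y_{i\cdot}-\tau=\mu(X_i)-\tau_0+\bar\epsilon_{i\cdot}-u\,m^{-1/2}$, so that with $W_{im}=m^{1/2}\bar\epsilon_{i\cdot}/\sigma_0$ (density $\phi_m$) the localized criterion is $n^{-1}G_{m,n}(u)=n^{-1}\sum_i\{1/2-\Phi(m^{1/2}(\mu(X_i)-\tau_0)/\sigma_0+W_{im}-u/\sigma_0)\}^2$ and $\hat u:=m^{1/2}(\hat\tau_{m,n}-\tau_0)=\arg\min_u n^{-1}G_{m,n}(u)$.

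The first substantive step is to identify the pointwise limit. Splitting the sum according to $X_i\le d^0$ and $X_i>d^0$ and letting $m,n\to\infty$, on the left region $\mu(X_i)=\tau_0$, so by the assumed $\phi_m\to\phi$ (equivalently $W_{im}\Rightarrow N(0,1)$, via Scheff\'e) and boundedness each term has mean $\ell(u):=E[\{1/2-\Phi(W-u/\sigma_0)\}^2]$ with $W\sim N(0,1)$; on the right region $\delta(X_i):=\mu(X_i)-\tau_0>0$, so $m^{1/2}\delta(X_i)/\sigma_0\to\infty$ and each term tends to $1/4$, the only exceptions being the $X_i$ with $\delta(X_i)=O(m^{-1/2})$, whose aggregate $f$-weight is controlled by assumption (A) applied with $\tau\downarrow\tau_0$. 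Hence $n^{-1}G_{m,n}(u)\to_p\Psi(u):=p_0\,\ell(u)+(1-p_0)/4$ with $p_0=\int_0^{d^0}f$.

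The identifiability crux is that $\Psi$ has a unique, well-separated global minimum at $u=0$; since the constant term is irrelevant this reduces to the statement for $\ell$. Here I would use $\ell'(a)=2\int(1/2-\Phi(t))\phi(t)\phi(t+a)\,dt$ (with $a=u/\sigma_0$): folding the integral at the origin and using that $g(t)=(1/2-\Phi(t))\phi(t)$ is odd with $g(t)<0$ for $t>0$, together with $\phi(t+a)<\phi(a-t)$ for $t,a>0$, gives $\ell'(a)>0$ for $a>0$, and symmetrically $\ell'(a)<0$ for $a<0$. Thus $\ell$ is strictly unimodal with $\ell(0)=\operatorname{Var}\{\Phi(W)\}=1/12$ and $\ell(\pm\infty)=1/4$, so $0$ is the unique minimizer and $\inf_{|u|\ge M}\ell(u)-\ell(0)\ge c(M)>0$ for every $M>0$; consequently $\Psi(0)=1/4-p_0/6$ is the separated minimum of $\Psi$.

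The remaining, and most delicate, step is a no-escape argument, needed because the reparametrization depends on $m$ and $\Psi$ describes the criterion only for bounded $u$. The goal is to show that for each fixed $M$, eventually and with high probability $n^{-1}G_{m,n}(u)\ge\Psi(0)+c$ for some $c>0$ whenever $|u|\ge M$. For $|u|$ bounded this follows from the separation of $\Psi$; for $u$ large---covering both the window $M\le|u|\le\delta m^{1/2}$ and the coarse range $|\tau-\tau_0|\ge\delta$---I would argue that off a level set of small $f$-measure every summand tends to $1/4$: for $\tau<\tau_0$ all arguments diverge to $+\infty$, while for $\tau>\tau_0$ only the $X_i$ with $|\mu(X_i)-\tau|=O(m^{-1/2})$ fail to saturate, and assumption (A), with $\eta$ chosen below $2p_0/3$, bounds their weight, so the criterion stays above $1/4-\eta/4>\Psi(0)$. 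A uniform law of large numbers---available since the summands are bounded and monotone in $\tau$---converts these limiting bounds into high-probability statements and confines $\hat u$ to a compact set $[-M,M]$, on which $\sup_{|u|\le M}|n^{-1}G_{m,n}(u)-\Psi(u)|\to_p0$ together with the argmin continuous-mapping theorem yields $\hat u\to_p0$, i.e. $m^{1/2}(\hat\tau_{m,n}-\tau_0)=o_p(1)$. The principal obstacle is exactly this stitching of the coarse $\tau$-scale and fine $u$-scale controls across the overlapping range $M\le|u|\le\delta m^{1/2}$, where assumption (A) does the essential work of preventing $\mu$ from lingering near any level above $\tau_0$.
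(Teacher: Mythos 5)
Your proposal is correct and mirrors the paper's own (supplementary) proof in all essentials: you localize at $u=m^{1/2}(\tau-\tau_0)$, establish uniform convergence of the normalized criterion over the triangular array (the same uniform Glivenko--Cantelli device the paper uses for Theorem 1) to $\Psi(u)=p_0\,\ell(u)+(1-p_0)/4$, verify via Scheff\'e that the $\phi_m\to\phi$ assumption yields the limit $\ell$ with its well-separated minimum at $0$, and deploy condition (A) exactly where the paper does, to block escape along $\tau>\tau_0$ where $\mu$ could linger near a level above $\tau_0$. The only repair worth noting is cosmetic: $\{Z_{im}(\tau)-1/2\}^2$ is not itself monotone in $\tau$ as your uniform-LLN step asserts, but since $Z_{im}(\tau)$ is monotone and bounded, writing the summand as $Z_{im}^2(\tau)-Z_{im}(\tau)+1/4$ (or invoking preservation of the Glivenko--Cantelli property under Lipschitz transforms) restores the claim.
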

\begin{remark} Condition (A) is guaranteed if, for example, $\mu$ is strictly increasing to the right of $d^0$ although it holds under weaker assumptions on $\mu$. In particular, it rules out flat stretches to the right of $d^0$. The assumption that $\phi_m$ converges to $\phi$ is not artificial, since convergence of the corresponding distribution functions to the distribution function of the standard normal is guaranteed by the central limit theorem. 
\end{remark}
This approach in \eqref{eq:Tau_mn} can also be emulated to construct estimators of $\tau_0$ for the standard regression setting by just going through the procedure with $p_n (X_i, \tau)$s instead of $p_{m,n}(X_i, \tau)$s and it is clear that this estimator is consistent. However, the theoretical properties of this estimator, such as the rate of convergence, are not completely known. Nevertheless, the procedure has good finite sample performance as indicated by the simulation studies in Section~\ref{sec:simulations}. The estimator is positively biased. This is due to the fact that a value larger than $\tau_0$ is likely to minimize the objective function in \eqref{eq:Tau_mn} as it can possibly fit the $p$-values arising from a stretch extending beyond $[0, d^0]$, in presence of noisy observations. The values smaller than $\tau_0$ do not get such preference as the true function never falls below $\tau_0$.

\subsection{To smooth or not to smooth}\label{tsnts}
The consistency of the two methods established in the previous sections justifies good large sample performance of the procedures, but does not provide us with practical guidelines on which method to use given a real application. In dose-response studies, it is quite difficult to find situations where both $m$ and $n$ are large. Typically, such studies do not administer too many dose levels which precludes $n$ from being large. So, we compare the finite sample performance of the two methods for different allocations of $m$ and $n$ to highlight their relative merits. 

We study the performance of the two methods for three different choices of regression functions. All these functions are assumed to be at the baseline value 0 to the left of $d^0 \equiv 0.5$. Specifically, $M_1$ is a piece-wise linear function rising from 0 to 0.5 between $d^0$ and 1; $M_2$, a convex curve, grows like a quadratic beyond $d^0$, and reaches 0.5 at 1; $M_3$ rises linearly with unit slope for values ranging from $d^0$ to 0.8 and then decreases with unit slope for values between 0.8 and 1.0. So, $M_1$ and $M_2$ are strictly monotone to the right of $d^0$ and exhibit increasing level of smoothness at  $d^0$. On the other hand, $M_3$ is tent-shaped and estimating $d^0$ is expected to be harder for $M_3$ compared to $M_1$.


For each allocation pair $(m,n)$ and a choice of a regression function, we generate responses $\{Y_{i1}, \ldots, Y_{im}\}$, with $Y_{ij} = \mu(X_{i}) + \epsilon_{ij}$, the $\epsilon_{ij}$s being independent $N(0,\sigma^2)$ with $\sigma = 0.3$. The $X_i$s are sampled from Uniform(0,1). The performance for estimating $d^0 \equiv 0.5$ is studied based on root mean square error computed over  2000 replicates, assuming a known variance and a known $\tau_0 \equiv 0$.  For illustrative purposes, we use the Gaussian kernel for Method 2. Based on heuristic computations, a bandwidth of the form $h_n = c n^{-1/(2p+1)}$ is chosen as it is expected to attain the minimax rate of convergence for estimating a cusp of order $p$, as per \cite{R98}. For $M_1$ and $M_3$, $p$ = 1 while for $M_2$, $p$ is 2. We report the simulations for the best $c$ which minimizes the average of the root mean square errors for the sample sizes considered, over a fine grid.

There are results in the literature which suggest a possibly different minimax rate of convergence based on calculations in a slightly different model \citep{N97, GTZ06} and hence a possibly different choice of optimal bandwidth. But not much improvement was seen in terms of the root mean square errors for other choices of bandwidth.

The root mean square errors and the biases for each allocation pair are given in Table 1. Both procedures are inherently biased to the right as the $p$-values are not necessarily close to zero to the immediate right of $d^0$. When $m$ and $n$ are comparable, e.g., $m \leq 15$ and $n \leq 15$, Method 2, which relies on smoothing, does not perform well compared to Method 1. However, when $m$ is much smaller than $n$, e.g., $m = 4$ and  $n = 80$, smoothing is efficient and Method 2 is preferred over Method 1. When both $m$ and $n$ are large, both methods work well. As Method 1 does not require selecting any tuning parameter, we recommend Method 1 in such situations.

\begin{table}%
\caption{Root mean square errors ($\times 10^2$) and biases ($\times 10^2$), the first and second entries respectively, for the estimate of threshold $d^0$ obtained using Methods 1 and 2, for the three models with
$\sigma =0.3$ and different choices of $m$ and $n$.}\label{shtmse}
\centering
\begin{small}
\begin{tabular}{|r||c|c|c|c|c|c|}
\hline
\multirow{3}{*}{$(m, n)$} & \multicolumn{2}{|c|}{$M_1$} &
\multicolumn{2}{|c|}{$M_2$} & \multicolumn{2}{|c|}{$M_3$}\\
\cline{2-7}
& Method 1 & Method 2 & Method 1 & Method 2 & Method 1 & Method 2\\
& & $(0.04 n^{-1/3})$ & & $(0.08 n^{-1/5})$ & & $(0.04 n^{-1/3})$\\
\hline \hline

(5, 5) & 16.9, 4.5 & 18.0, 9.6 & 20.2, 11.6 & 21.8, 10.9 & 20.5, 7.5 & 23.7, 14.3\\

(5, 10) & 15.7, 6.7 & 16.6, 9.1 & 21.8, 17.2 & 21.3, 11.5 & 20.1, 10.9 & 20.8, 12.4\\

(10, 10) & 13.4, 3.3 & 14.1, 5.6 & 19.0, 13.9 & 19.3, 8.6 & 14.9, 4.6 & 15.6, 6.9\\

(10, 15) & 11.8, 4.9 & 12.6, 5.2 & 18.7, 15.5 & 19.0, 7.8 & 12.2, 5.3 & 12.9, 5.8\\

(10, 20) & 10.8, 6.2 & 10.9, 4.6 & 18.5, 16.7 & 17.6, 6.9 & 10.9, 6.4 & 11.0, 4.9\\

(15, 10) & 12.5, 1.8 & 12.6, 4.0 & 17.7, 11.7 & 18.4, 7.0 & 13.5, 2.0 & 13.2, 4.6\\

(15, 15) & 10.4, 3.8 & 10.9, 4.0 & 17.2, 14.0 & 17.5, 6.6 & 10.9, 3.8 & 11.2, 3.8\\

(15, 20) & 9.4, 4.2 & 9.8, 3.8 & 17.0, 14.9 & 17.4, 5.9 & 9.2, 4.4 & 10.0, 3.6\\

(20, 10) & 12.4, 1.0 & 12.3, 2.9 & 16.5, 11.2 & 17.5, 6.5 & 12.7, 0.7 & 12.3, 3.9\\

(20, 15) & 10.2, 2.5 & 10.6, 2.5 & 16.2, 13.3 & 17.0, 5.8 & 10.3, 2.6 & 10.6, 2.7\\

(20, 20) & 8.9, 3.3 & 9.7, 2.3 & 15.9, 13.9 & 16.1, 5.4 & 8.7, 3.6 & 9.3, 2.7\\

(3, 80) & 16.2, 14.5 & 10.5, 8.0 & 26.9, 26.2 & 16.4, 9.3 & 19.7, 16.6 & 11.0, 8.3\\

(3, 100) & 16.2, 14.6 & 9.9, 7.7 & 27.0, 26.5 & 15.9, 8.9 & 18.7, 15.9 & 9.8, 7.4\\

(4, 80) & 14.1, 12.4 & 9.4, 6.9 & 24.8, 24.2 & 15.7, 8.6 & 15.0, 12.9 & 9.8, 6.8\\

(4, 100) & 14.0, 12.5 & 8.8, 6.3 & 24.9, 24.4 & 14.8, 7.8 & 14.4, 12.5 & 8.7, 6.3\\

\hline
\end{tabular}
\end{small}
\end{table}

\subsection{Extension to Dependent Data}\label{depdt}
The global warming data falls under the standard regression setup, but involves dependent errors. Moreover, the data arises from a fixed design setting, with observations recorded annually. Here, we discuss the extension of Theorem~\ref{consistency-theorem2} in this setting.  Under fixed uniform design, we consider the model $Y_{i, n} = \mu\left(i/n\right)+ \epsilon_{i, n} \ \ (i = 1, \ldots , n)$. Under such a model, $Y_{i,n}$ and $\epsilon_{i,n}$ must be viewed as triangular arrays. The estimator of the regression function is  $ \tilde{\mu}(x) = {(n h_n)}^{-1} \sum_{i} Y_{i, n} K\left\{ h_n^{-1} ({x - i/n})\right\}$. For each $n$, we assume that the process $\epsilon_{i, n}$ is stationary and exhibits short-range dependence. Under Assumptions 1-5, listed in \citet{Ro97}, it can be shown that $ (n h_n)^{1/2} \{\tilde{\mu}(x_k) - \mu(x_k)\} $, $ x_k \in (0,1), k =1,2$ and $x_1 \neq x_2$, converge jointly in distribution to independent normals with zero mean. 
In this setting, the working $p$-values, defined here to be  ${p}^{(1)}_n(x, \tau_0) = 1 - \Phi\{(n h_n)^{1/2} (\tilde{\mu} (x) - {\tau_0})\}$, still exhibit the desired dichotomous behavior. To keep the approach simple, we have not normalized by the estimate of the variance as this would have involved estimating the auto-correlation function. The conclusions of Theorem~\ref{consistency-theorem2} can be shown to hold when $\hat{d}_{n}$ is constructed using \eqref{eq:CPSSE2} based on ${p}^{(1)}_n(X_i, \hat{\tau})$s. Here, $\hat{\tau}$ is constructed via averaging the responses over an interval that can be safely assumed to be on the left of $d^0$, as discussed in Section~\ref{estntau}. 
\section{Simulation Results and Data Analysis}\label{sec:simulations}

\subsection{Simulation Studies}
We consider the same three choices of the regression function $M_1$, $M_2$ and $M_3$, as in Section~\ref{tsnts}. The data are generated for allocation pair $(m,n)$ and a choice of regression function, with the errors being independent $N(0,\sigma^2)$, where $\sigma = 0.3$. The $X_i$s are again sampled from Uniform(0,1). We study the performance of the two methods when the estimates of $d^0$ are constructed using $p$-values that are normalized by their respective estimates of variances. 

Firstly, we consider Method 1. In Table 2, we report the root mean square error and the bias for the estimators of $d^0$ and $\tau_0$, for different choices of $m$ and $n$.  For moderate sample sizes, $M_3$ shows greater root mean square errors in general than $M_1$ and $M_2$ as the signal is weak close to 1 for $M_3$. For large sample sizes, the performance of the estimate is similar for $M_1$ and $M_3$ and is better than that for $M_2$, which can be ascribed to $M_2$ being smoother at $d^0$. The procedure is inherently biased to the right as $p$-values are not necessarily close to zero to the immediate right of $d^0$. Further, the estimator, on average, moves to the left with increase in $m$ as the desired dichotomous behavior becomes more prominent.

\begin{table}
\centering
\begin{small}
\caption{Root mean square errors ($\times 10^2$) and biases ($\times 10^2$), the first and second entries respectively, for the estimate of threshold $d^0$ obtained using Method 1 and the estimate of $\tau_0$ with $\sigma = 0.3$ for the three models.}\label{stvsadst1}
\begin{tabular}{|r||c|c||c|c||c|c|}
\hline
\multirow{2}{*}{ $(m,n)$ } & \multicolumn{2}{|c||}{ $M_1$ } & \multicolumn{2}{|c||}{ $M_2$ } & \multicolumn{2}{|c|}{ $M_3$ }\\
\cline{2-7}
 &  $d^0$  & $\tau_0$ &  $d^0$  & $\tau_0$ &  $d^0$  & $\tau_0$\\
\hline \hline
 (5, 5 )  & 25.5, 21.5 & 17.5, 9.9 & 28.2, 25.5 & 13.4, 6.0 & 31.2, 26.2 & 14.2, 8.4\\

 (5, 10 )  & 24.8, 20.5 & 14.3, 8.6 & 27.1, 22.3 & 10.2, 4.9 & 30.3, 24.3 & 11.2, 7.2\\

 (10, 10 )  & 20.7, 15.7 & 12.4, 6.7 & 24.6, 21.6 & 7.7, 3.5 & 27.2, 21.5 & 10.4, 6.9\\

 (10, 20 )  & 17.2, 13.9 & 9.0, 5.2 & 24.0, 22.4 & 5.4, 2.9 & 24.8, 19.8 & 8.6, 6.2\\

(10, 50 )  & 13.6, 12.1 & 5.6, 3.8 & 23.5, 22.8 & 3.8, 2.7 & 18.6, 15.7 & 7.0, 5.8\\

 (20, 50 )  & 9.0, 7.6 & 3.1, 1.8 & 19.4, 18.7 & 2.5, 1.7 & 12.4, 10.0 & 5.0, 3.4\\

 (50, 100 )  & 5.0, 4.3 & 1.1, 0.7 & 15.2, 14.8 & 1.2, 0.9 & 5.2, 4.6 & 1.4, 0.9\\
\hline
\end{tabular}
\end{small}
\end{table}

Next, we study the performance of Method 2. As the estimation procedure is entirely based on $\{ (X_i, \bar{Y}_{i \cdot}) \}_{i=1}^n$, without loss of generality, we take $m$ to be 1. We again work with the Gaussian kernel with the smoothing bandwidth chosen in the same fashion as in Section~\ref{tsnts}.
In Table 3, we report the root mean square error and the bias for the two estimators, for different choices of $m$ and $n$. We see trends similar to those for Method 1, across the choices of the regression functions.

We studied the performance of the estimates under settings where $d^0$ is closer to the boundary of $[0,1]$. Optimal allocation pairs  $(m,n)$ were also computed for a given model and a fixed budget $N = m \times n$. These details are skipped here but can be found in Section 5.1 of  the Supplementary Material 1. We also compared Method 1 to some competing procedures developed in the pharmacological dose-response setting to identify the minimum effective dose, namely the approaches in \citet{W71}, \citet{HB99},  \citet{C99} and \citet{TL02}. Method 1 was seen to perform well in comparison with these methods. For more details, see Section 5.2 of the Supplementary Material 1.

\begin{table}
\centering
\caption{Root mean square errors ($\times 10^2$) and biases ($\times 10^2$), the first and second entries respectively, for the estimate of threshold $d^0$ obtained using Method 2 and the estimate of $\tau_0$ with $\sigma = 0.3$ for the three models.}\label{stvsadst1}
\begin{small}
\begin{tabular}{|r||c|c||c|c||c|c|}
\hline
\multirow{3}{*}{$n$} & \multicolumn{2}{|c||}{$M_1$} & \multicolumn{2}{|c||}{$M_2$} & \multicolumn{2}{|c|}{$M_3$}\\

 & \multicolumn{2}{|c||}{$h_n = 0.1 n^{-1/3}$} & \multicolumn{2}{|c||}{$h_n = 0.15 n^{-1/5}$} & \multicolumn{2}{|c|}{$h_n = 0.1 n^{-1/3}$}\\
\cline{2-7}
 & $d^0$ & $\tau_0$ & $d^0$ & $\tau_0$ & $d^0$ & $\tau_0$\\

\hline \hline

20 & 28.5, 17.9 & 20.9, 10.5 & 29.0, 17.8 & 14.7, 5.7 & 32.6, 22.4 & 17.4, 8.4\\

30 & 26.8, 15.5 & 18.4, 9.4 & 26.8, 14.6 & 12.2, 3.8 & 31.9, 21.8 & 15.1, 7.4\\

50 & 23.7, 13.8 & 15.8, 8.0 & 24.4, 12.4 & 9.9, 3.1 & 28.4, 18.7 & 13.1, 6.9\\

80 & 21.5, 11.2 & 13.7, 6.6 & 22.2, 8.4 & 7.8, 1.9 & 27.0, 17.8 & 11.7, 6.8\\

100 & 19.5, 9.6 & 12.5, 5.3 & 21.6, 8.2 & 7.5, 1.7 & 25.1, 14.7 & 10.9, 6.1\\

200 & 15.9, 6.2 & 8.8, 3.5 & 19.1, 6.0 & 4.9, 1.1 & 21.0, 12.2 & 9.2, 5.3\\

500 & 10.4, 0.6 & 4.6, 1.4 & 16.4, 3.9 & 2.7, 0.5 & 14.2, 5.4 & 6.0, 2.5\\

1000 & 9.5, 0.4 & 3.1, 0.7 & 15.0, 2.0 & 2.0, 0.4 & 10.5, 2.1 & 3.9, 1.2\\

1500 & 8.5, 0.3 & 2.3, 0.5 & 14.8, 1.5 & 1.8, 0.3 & 8.8, 0.8 & 2.8, 0.8\\

2000 & 7.2, 0.2 & 2.0, 0.5 & 13.8, 0.7 & 1.5, 0.2 & 8.1, 0.1 & 2.3, 0.5\\

\hline
\end{tabular}

\end{small}
\end{table}
Based on our simulation study, including results not shown here due to space considerations, the following practical recommendations are in order. 
 In terms of optimal allocation under a fixed budget $N$, it is better for one to invest in an increased number of covariate values $n$, rather than replicates $m$. In the case where the threshold $d^0$ is closer to the boundaries, investment in $n$ proves fairly important. Further, when the sample size is reasonably large, the procedure that avoids estimating the variance function and works with non-normalized $p$-values, is competitive and is recommended in the regression settings with heteroscedastic errors and time-series.

\subsection{Data Applications}\label{data-application}

The first data application deals with a dose-response experiment that studies the effect on cells from the IPC-81 leukemia rat cell line to treatment with 1-methyl-3-butylimidazolium tetrafluoroborate, at different doses measured in $\mu$M, micro mols per liter \citep{R04}. The substance treating the cells is an ionic liquid and the objective is to study its toxicity in a mammalian cell culture to assess environmental hazards. The question of interest here is at what dose level toxicity becomes lethal and cell cultures stop responding.

It can be seen from the physiological responses shown in the left panel of Fig. \ref{full-data}, that there is a decreasing trend followed by a flat stretch.  Hence, it is reasonable to postulate a response function that stays above a baseline level $\tau_0$  until a transition point $d^0$ beyond which it stabilizes at its baseline level. We assume errors to be heteroscedastic, as the variability in the responses changes with level of dose, with more variation for moderate dose levels compared to extreme dose levels. 
This is the small $(m, n)$ case  with $m$ and $n$ being comparable; in fact, $m = n  = 9$. Hence we apply {\it Method 1} to this problem. The estimate of $\tau_0$ was constructed using the procedure based on $p$-values as described in Section~\ref{estntau}. We get $\hat{\tau} = 0.0286$ with the corresponding $\hat{d} = 5.522 \log \mu M $, the third observation from right. We believe that this is an accurate estimate of $d^0$, since the cell-cultures exhibit high responses at earlier dose levels and no significant signal to the right of the computed $\hat{d}$.


The second example, as discussed in the introduction, involves measuring mercury concentration in the atmosphere through the light detection and ranging technique. There are 221 observations with the predictor variable range varying from 390 to 720. As supported by the middle panel of Fig.~\ref{full-data}, the underlying response function is at its baseline level followed by a steep descent, with the point of change being of interest. There is evidence of heteroscedasticity and hence, we employ Method 2 without normalizing by the estimate of the variance. It is reasonable to assume here that till the range value 480 the function is at its baseline. The estimate of $\tau$ is obtained by taking the average of observations until range reaches 480, which gives $\hat{\tau} = - 0.0523.$  The estimates $\hat{d}$, computed for bandwidths varying from 5 to 30, 
show a fairly strong agreement as they lie between 534 and 547, with the estimates getting bigger for larger bandwidths. 
The cross-validated optimal bandwidth for regression is 14.96  for which the corresponding estimate of $d^0$ is 541.

The global warming data contains global temperature anomalies, measured in degree Celsius, for the years 1850 to 2009. These anomalies are temperature deviations measured with respect to the base period 1961--1990. The data are modeled as described in Section~\ref{depdt}. As can be seen in the right panel of Fig.~\ref{full-data}, the function stays at its  baseline value for a while followed by a non-decreasing trend. The flat stretch at the beginning is also noted in \citet{ZW11} where isotonic estimation procedures are considered in settings with dependent data. 
The estimate of the baseline value, after averaging the anomalies up to the year 1875, is  $\hat{\tau} = -0.3540$. With the dataset having 160 observations, estimates of the threshold were computed for bandwidths ranging from 5 to 30. The estimates varied over a fairly small time frame, 1916--1921. This is consistent with the observation on page 2 of \citet{ZW11} that global warming does not appear to have begun until 1915.  The optimal bandwidth for regression obtained through cross-validation is 13.56, for which $\hat{d}$ is 1920.
\subsection{Extensions}
\label{relaxsec}
Here we discuss some of the possible extensions of our proposed procedure.

\emph{(i) Fixed design setting:}  Although the results in this paper have been proven assuming a random design, they can be easily extended to a fixed design setup. Consistency of the procedures will continue to hold.

\emph{(ii) Unequal replicates:} In this paper, we dealt with the case of a balanced design with a fixed number of replicates $m$ for every dose level $X_i$. The case of varying number of replicates $m_i$ can be handled analogously. In the dose-response setting, Theorem~\ref{consistency-theorem} will continue to hold provided the minimum of the $m_i$s goes to infinity. In the standard regression setting, Theorem~\ref{consistency-theorem2} can also be generalized to the situation with unequal number of replicates at different doses.
\emph{(iii) Adaptive stump model:} The use of 1/2 and 0 as the stump levels may not always be the best strategy. The $p$-values to the right of $d^0$ may not be small enough to be well approximated by 0 for small $m$. One can deal with this issue by using a more adaptive approach which keeps the stump-levels unspecified and estimates them from the data. For example, in the dose-response setting, one can define,
\[ (\hat{\alpha}_{m,n}, \hat{\beta}_{m,n}, \hat{d}_{m,n}) = \arg \min_{(\alpha,\beta,d) \in [0,1]^3}\;\sum_{i=1}^n\,\{p_{m,n}(X_i) - \alpha\,1(X_i \leq d) - \beta\,1(X_i > d)\}^2 \,.\]
Please see pages 5 and 16 in Supplementary Material 1 for more details on 
this estimator.

\section{Concluding Discussion} \label{sec:conclusions}
We briefly discuss a few issues, some of which constitute ongoing and future work on this topic. While we have developed a novel methodology for threshold estimation and established consistency properties rigorously, a pertinent question that remains to be addressed is the construction of confidence intervals for $d^0$.  A natural way to approach this problem is to consider the limit distribution of our estimators for the two settings and use the quantiles of the limit distribution to build asymptotically valid confidence intervals. This is expected to be a highly non-trivial problem involving hard non-standard asymptotics. The rate of convergence crucially depends on the order of the cusp, $p$, at $d^0$. As mentioned earlier, the minimax rate for this problem is $N^{-1/(2p+1)}$ as per \citet{R98}. This is in disagreement with the faster rates $min(N^{-2/(2p+3)}, N^{-1/(2p+1)})$ obtained in \citet{N97} for a change-point estimation problem in a density deconvolution model. There are recent results \citep{GTZ06, GJTZ08} which suggest that Neumann's rate should be optimal, but an asymptotic equivalence between the density model in  \citet{N97} and the regression model assumed in \citet{R98} and our paper has not been formally established. Based on preliminary calculations, it is expected that our procedure will, at least, attain a rate of $N^{-1/(2p+1)}$, under optimal allocation between $m$ and $n$ for Method 1 and for a suitable choice of bandwidth for Method 2.


In this paper, we have restricted ourselves to a univariate regression setup. 
 Our approach can potentially be generalized to identify the baseline region, the set on which the function stays at its minimum, in multi-dimensional covariate spaces. This is a special case of {\it level sets estimation}, a problem of considerable interest in statistics and engineering. The $p$-values, constructed analogously, will continue to exhibit a limiting dichotomous behavior which can be exploited to construct estimates of the baseline region. Procedures that look for a jump in the derivative of a certain order of  $\mu$ \citep{M92, R98} do not have natural extensions to high dimensional settings as the order of differentiability can vary from point to point on the boundary of the baseline region. 
\section*{Acknowledgement}
We thank Harsh Jain for bringing to our attention a threshold estimation problem that eventually led to the formulation and development of this framework. The work of the authors were partially supported by NSF and NIH grants.
\section*{Supplementary Material}
The proof of Theorem \ref{tau-cons-theorem}, an extensive simulation study and a discussion on other variants of the proposed methods are given in Supplementary Material 1 available at http://arxiv.org/PS$\_$cache/arxiv/pdf/1008/1008.4316v1.pdf. 

\appendix

\section*{Appendix }
\subsection*{Proofs}
We start with establishing an auxiliary result used in the subsequent developments.
\begin{theorem}
\label{uniform-cts-mapping}
Let $\mathcal{T}$ be an indexing set and $\{\mathbb{M}_n^{\tau}: \tau \in \mathcal{T}\}_{n=1}^{\infty}$ a family of real-valued stochastic processes indexed by $h \in \mathcal{H}$. Also, let $\{M^{\tau}: \tau \in \mathcal{T}\}$ be a family of deterministic functions defined on $\mathcal{H}$, such that each $M^{\tau}$ is maximized at a unique point $h(\tau) \in \mathcal{H}$. Here $\mathcal{H}$ is a metric space and denote the metric on $\mathcal{H}$ by $d$. Let $\hat{h}_n^{\tau}$ be a maximizer of $\mathbb{M}_n^{\tau}$. Assume further that:

\noindent (a) $\sup_{\tau \in \mathcal{T}}\,\sup_{h \in \mathcal{H}}\,|\mathbb{M}_n^{\tau}(h) - M^{\tau}(h)| = o_p(1) $, and \newline
(b) for every $\eta > 0$, $c(\eta) \equiv  \inf_{\tau}\,\inf_{h \notin B_{\eta}\{h(\tau)\}}\, [M^{\tau} \{h({\tau})\} -M^{\tau}(h) ] > 0$, where $B_{\eta}(h)$ denotes the open ball of radius $\eta$ around $h$.

Then, (i) $\sup_{\tau}\,d \{\hat{h}_n^{\tau}, h (\tau) \} = o_p(1)$. Furthermore, if $\mathcal{T}$ is a metric space and $h (\tau)$ is continuous in $\tau$, then (ii) $\hat{h}_n^{\tau_n} - h (\tau_0)= o_p(1)$, provided $\tau_n$ converges to $\tau_0$. In particular, if the $\mathbb{M}_n^{\tau}$s themselves are deterministic functions, the conclusions of the theorem hold with the convergence in probability in (i) and (ii) replaced by usual non-stochastic convergence.
\end{theorem}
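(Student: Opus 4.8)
The plan is to run the standard argmax consistency argument (in the spirit of van der Vaart, \emph{Asymptotic Statistics}, Theorem 5.7) but to carry the uniformity in $\tau$ through every inequality, with the separation hypothesis (b) supplying a single modulus $c(\eta)$ that is valid for all $\tau$ simultaneously. The whole argument hinges on one elementary comparison. First I would introduce the shorthand $R_n = \sup_{\tau}\sup_{h}|\mathbb{M}_n^{\tau}(h) - M^{\tau}(h)|$, which is $o_p(1)$ by assumption (a). For an arbitrary fixed $\tau$, since $\hat{h}_n^{\tau}$ maximizes $\mathbb{M}_n^{\tau}$ we have $\mathbb{M}_n^{\tau}(\hat{h}_n^{\tau}) \ge \mathbb{M}_n^{\tau}\{h(\tau)\}$. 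Writing $M^{\tau}\{h(\tau)\} - M^{\tau}(\hat{h}_n^{\tau})$ as a telescoping sum of $[M^{\tau} - \mathbb{M}_n^{\tau}]$ at $h(\tau)$, the nonpositive increment $\mathbb{M}_n^{\tau}\{h(\tau)\} - \mathbb{M}_n^{\tau}(\hat{h}_n^{\tau})$, and $[\mathbb{M}_n^{\tau} - M^{\tau}]$ at $\hat{h}_n^{\tau}$, the middle term drops out and each remaining term is bounded by $R_n$, giving $M^{\tau}\{h(\tau)\} - M^{\tau}(\hat{h}_n^{\tau}) \le 2R_n$ for \emph{every} $\tau$ with the same $R_n$.

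To obtain conclusion (i), fix $\eta > 0$. On the event $\{\sup_{\tau}\,d(\hat{h}_n^{\tau}, h(\tau)) > \eta\}$ there exists some $\tau^{\ast}$ with $\hat{h}_n^{\tau^{\ast}} \notin B_{\eta}\{h(\tau^{\ast})\}$, and then hypothesis (b) forces $M^{\tau^{\ast}}\{h(\tau^{\ast})\} - M^{\tau^{\ast}}(\hat{h}_n^{\tau^{\ast}}) \ge c(\eta) > 0$. Combining this with the bound from the previous paragraph applied at $\tau^{\ast}$ yields $c(\eta) \le 2R_n$, so that $\{\sup_{\tau}\,d(\hat{h}_n^{\tau}, h(\tau)) > \eta\} \subseteq \{R_n \ge c(\eta)/2\}$. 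Taking probabilities and letting $n \to \infty$ gives $P\{\sup_{\tau}\,d(\hat{h}_n^{\tau}, h(\tau)) > \eta\} \to 0$ by (a), which is exactly (i). I would note that no compactness of $\mathcal{T}$ is needed here: I only use the \emph{existence} of a $\tau^{\ast}$ with distance exceeding $\eta$, not an attained supremum.

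For conclusion (ii), with $\mathcal{T}$ a metric space and $h(\cdot)$ continuous, I would split by the triangle inequality as $d(\hat{h}_n^{\tau_n}, h(\tau_0)) \le d(\hat{h}_n^{\tau_n}, h(\tau_n)) + d(h(\tau_n), h(\tau_0))$. The first term is dominated by $\sup_{\tau}\,d(\hat{h}_n^{\tau}, h(\tau)) = o_p(1)$ from part (i), and the second is $o_p(1)$ because $\tau_n \to \tau_0$ and $h$ is continuous, by the continuous mapping theorem. Their sum is therefore $o_p(1)$, establishing (ii). The final remark is immediate: when the $\mathbb{M}_n^{\tau}$ are deterministic, $R_n$ is a deterministic null sequence, so every ``$o_p(1)$'' above becomes ordinary convergence and the same two displays carry through verbatim.

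The step I expect to carry the real content is the passage in (i) from a pointwise-in-$\tau$ comparison to a uniform-in-$\tau$ conclusion. This is precisely where hypothesis (b) does the work: it converts the family of pointwise well-separatedness conditions into a single constant $c(\eta)$ that bounds the curvature of $M^{\tau}$ away from its maximizer uniformly over $\tau$. Everything else is routine bookkeeping; the only care required is to keep the two suprema in $R_n$ intact so that the separation constant and the approximation error can be compared against each other without any remaining dependence on $\tau$.
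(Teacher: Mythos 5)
Your proof is correct and takes essentially the same route as the paper: your telescoping bound $M^{\tau}\{h(\tau)\} - M^{\tau}(\hat{h}_n^{\tau}) \le 2R_n$ together with the uniform separation constant $c(\eta)$ reproduces exactly the paper's key event inclusion $\{\sup_{\tau} d(\hat{h}_n^{\tau}, h(\tau)) > \eta\} \subseteq \{\sup_{\tau}\|\mathbb{M}_n^{\tau} - M^{\tau}\|_{\mathcal{H}} \ge c(\eta)/2\}$ (the paper's ``generic steps''). Your explicit triangle-inequality argument for part (ii) and the deterministic case merely spell out steps the paper leaves implicit, and your observation that no compactness of $\mathcal{T}$ is needed is consistent with the paper's argument.
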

\begin{proof} We provide the proof in the case when $\mathcal{H}$ is a sub-interval of the real line, the case that is relevant for our applications. However, there is no essential difference in generalizing the argument to metric spaces - euclidean distances simply need to be replaced by the metric space distance and open intervals by open balls.

Given $\eta > 0$, we need to deal with $P^{\star}\,\{\sup_{\tau \in \mathcal{T}}\,|\hat h_n^{\tau} - h(\tau)| > \eta\}$, where $P^*$ is the outer probability. The event $A_{n,\eta} \equiv \{\sup_{\tau \in \mathcal{T}}\,|\hat h_n^{\tau} - h(\tau)| > \eta\}$ implies that for some $\tau$, $\hat h_n^{\tau} \notin (h(\tau) - \eta, h(\tau) + \eta)$ and therefore $  M^{\tau}\{h(\tau)\} - M^{\tau}(\hat h_n^{\tau}) \geq \inf_{h \notin (h(\tau) - \eta, h(\tau) + \eta)}\,[M^{\tau}\{h(\tau)\} - M^{\tau}(h) ] \,.$
This is equivalent to $M^{\tau}\{h(\tau)\} - M^{\tau}(\hat h_n^{\tau}) + \mathbb{M}_n^{\tau}(\hat h_n^{\tau}) - \mathbb{M}_n^{\tau}\{h(\tau)\} \geq \inf_{h \notin (h(\tau) - \eta, h(\tau) + \eta)}\,[M^{\tau}\{h(\tau)\} - M^{\tau}(h) ]  + \mathbb{M}_n^{\tau}(\hat h_n^{\tau}) -\mathbb{M}_n^{\tau}\{h(\tau)\} \,.$
Now, $\mathbb{M}_n^{\tau}(\hat h_n^{\tau}) -\mathbb{M}_n^{\tau}\{h(\tau)\} \geq 0$ and the left side of the above inequality is bounded above by \[2\,\|\mathbb{M}_n^{\tau} - M^{\tau}\|_{\mathcal{H}} \equiv 2\,\sup_{h \in \mathcal{H}}\,|\mathbb{M}_n^{\tau}(h) - M^{\tau}(h)| \,,\]
implying that $2 \|\mathbb{M}_n^{\tau} - M^{\tau}\|_{\mathcal{H}} \geq \inf_{h \notin (h(\tau) - \eta, h(\tau) + \eta)}\,[ M^{\tau}\{h(\tau)\} - M^{\tau}(h) ]$
which, in turn, implies that $2\,\sup_{\tau \in \mathcal{T}}\|\mathbb{M}_n^{\tau} - M^{\tau}\|_{\mathcal{H}} \geq \inf_{\tau \in \mathcal{T}}\,\inf_{h \notin (h(\tau) - \eta, h(\tau) + \eta)}\,[M^{\tau}\{h(\tau)\} - M^{\tau}(h) ] \equiv c(\eta)$ by definition. Hence $A_{n,\eta} \subset \{\sup_{\tau \in \mathcal{T}}\|\mathbb{M}_n^{\tau} - M^{\tau}\|_{\mathcal{H}} \geq c(\eta)/2 \}.$ By assumptions (a) and (b), $P^{\star}\,\{\sup_{\tau \in \mathcal{T}}\|\mathbb{M}_n^{\tau} - M^{\tau}\|_{\mathcal{H}} \geq c(\eta)/2\}$ goes to 0 and therefore so does $P^{\star}(A_{n,\eta})$.
\end{proof}
\begin{remark}  We will call the sequence of steps involved in deducing the inclusion: \[ \left\{ \sup_{\tau \in \mathcal{T}}\,|\hat h_n^{\tau} - h(\tau)| > \eta \right\} \subset \left\{ \sup_{\tau \in \mathcal{T}} \|\mathbb{M}_n^{\tau} - M^{\tau}\|_{\mathcal{H}} \geq c(\eta)/2 \right\} \,,\] as \emph{generic steps}. Very similar steps will be required again in the proofs of the theorems to follow. We will not elaborate those arguments, but refer back to the \emph{generic steps} in such cases.
\end{remark}

\begin{proof}[of Theorem~\ref{consistency-theorem}] To exhibit the dependence on the baseline value $\tau_0$ (or its estimate), we use notations of the form $\M_n(d,\tau_0)$ and $\hat{d}_{m,n}(\tau_0)$.  For convenience, let $T^{(m)}(X_i) = m^{1/2}(\bar{Y}_{i \cdot} - \tau_0)$ and $Z_{i m}({\tau_0}) = \tilde{p}_{m,n}(X_i, \tau_0) = 1 - \Phi \{T^{(m)}(X_i)\}$. As $m$ changes, the distribution of $Z_{im}(\tau_0)$ changes, and so we effectively have a triangular array $\{(X_i,Z_{im}(\tau_0)) \}_{i=1}^n \sim P_m$, say. Using empirical process notation, $\mathbb{M}_{m,n}(d, {\tau_0}) \equiv \mathbb{P}_{n,m} \{Z_{1m}(\tau_0) - 1/4\} 1(X_1 \leq d)$, where $\mathbb{P}_{n,m}$ denotes the empirical measure of the data.  Firstly, we find the limiting process  for $\mathbb{M}_{m,n}(d, {\tau_0})$. Define $M_m(d) \equiv  P_m \{Z_{1m}(\tau_0) - 1/4\} 1(X_1 \leq d)$ where $M_m(d)$ can be simplified as
\begin{eqnarray}\label{eq:defProc}
M_m(d) = \int_0^d \{\nu_m(x) - 1/4\} f(x) dx,
\end{eqnarray}
where $\nu_m(x) = E \{Z_{i m}({\tau_0}) \mid X_i =x \}$. Observe that for $X_i =x$, as $m \rightarrow \infty$, $T^{(m)}(x)$ converges in distribution to $N(0, \sigma^2(x))$ for $x \leq d^0$ and $T^{(m)}(x) = m^{1/2}\{\bar{Y}_{i \cdot} - \mu(x)\} + m^{1/2}\{\mu(x) - \tau_0\} \ {\rightarrow} \infty$, in probability, for $x > d^0$. Thus, $\nu_m(x) \rightarrow \nu (x)$ for all $x \in [0,1]$, where $\nu(x) = (1/2) {1} (x \le d^0)$. Let $M(d)$ be the same expression for $M_m(d)$ in (\ref{eq:defProc}) with $\nu_m(x)$ replaced by $\nu(x)$, e.g., $M(d) = \int_0^d \{\nu(x) - 1/4\} f(x) dx $. Observe that for $c = (1/4) \int_0^{d^0} f(x) dx$, $M(d) \le c$ for all $d$, and $M(d^0) = c$. Also, it is easy to see that $d^0$ is the unique maximizer of $M(d)$.
Now, the difference $|M_m(d) - M(d)|$, can be bounded by $ \int_0^1 |\nu_m(x) - \nu(x)| f(x) dx$ which goes to 0 by the dominated convergence theorem. As the bound does not depend on $d$, we get $\|M_m - M\|_{\infty} \rightarrow 0$, where $\|\cdot\|_\infty$ denotes the supremum. By Theorem \ref{uniform-cts-mapping}, $d_{m} = \arg \max_{d \in [0,1]} M_m(d) \rightarrow \arg \max_{d \in [0,1]} M(d) = d^0$ as $m \rightarrow \infty$. It would now suffice to show that $\{\hat{d}_{m,n}(\hat{\tau}) - d_m\}$ is $o_p(1)$.

Fix $\epsilon >0$ and consider the event $\{|\hat d_{m,n}(\hat{\tau}) - d_m | > \epsilon\}$. Since $d_m$ maximizes $M_m$ and $\hat{d}_{m,n}(\hat{\tau})$ maximizes $\mathbb{M}_{m,n}(\cdot, \hat{\tau})$, by arguments analogous to the \emph{generic steps} in the proof of Theorem \ref{uniform-cts-mapping}, we have:
\begin{equation}
|\hat d_{m,n}(\hat{\tau}) - d_m | > \epsilon \Rightarrow \|\mathbb{M}_{m,n}(\cdot, \hat{\tau}) - M_m(\cdot)\|_{\infty} \ge \eta_m(\epsilon)/2\,, \nonumber
\end{equation}
where $\eta_m(\epsilon) = \inf_{d \in (d_m - \epsilon , d_m + \epsilon )^c} \{ M_m(d_m) - M_{m}(d) \}$.

We claim that there exists $\eta > 0$ and an integer $M_0$ such that $\eta_m(\epsilon) > \eta > 0$ for all $m \geq M_0$.
To see this, let us bound $M_m(d_m) - M_{m}(d) $ below by $ - 2 \|M_{m} - M\|_{\infty} + M(d_m) - M(d) $. As $\|M_{m} - M\|_{\infty} \rightarrow 0$ as $m \rightarrow \infty$, it is enough to show that there exists $\eta > 0$ such that for all sufficiently large $m$, $\inf_{d \in (d_m - \epsilon , d_m + \epsilon )^c} \{ M(d_m) - M(d)  \} > \eta$.
We split $M(d_m) - M(d) $ into two parts as $\left\{ M(d^0)  - M(d) \right\} + \left\{ M(d_m) - M \left(d^0\right)  \right\}$. Notice that by the continuity of $M(\cdot)$, the second term goes to $0$. To handle the first term, notice that $M(d)$ is a continuous function with a unique maximum at $d^0$.
There exists $M_0 \in \mathbb{N}$ such that for all $m > M_0$, we have $(d^0 - \epsilon/2 , d^0 + \epsilon/2 ) \subset (d_m - \epsilon , d_m + \epsilon ) $ as $d_m \rightarrow d^0$. So, for $m > M_0$, $\inf_{d \in (d_m - \epsilon , d_m + \epsilon )^c} \{ M(d^0) - M(d)  \} \ge \inf_{d \in (d^0 - \epsilon/2 , d^0 + \epsilon/2 ) ^c} \{ M(d^0) - M(d) \}$. As $M(d^0) - M(d)$ is continuous, this infimum is attained in the compact set $[0, 1] \cap (d^0 - \epsilon/2 , d^0 + \epsilon/2 ) ^c $ and is strictly positive. Thus, a positive choice for $\eta$, as claimed, is available. 

\noindent
The claim yields,
\begin{eqnarray}\label{eq:bnd1}
\lefteqn{P_m\{|\hat d_{m,n}(\hat{\tau}) - d_m | > \epsilon\}  }\hspace{.0in}\\ \nonumber
&\leq & P_m\{\| \mathbb{M}_{m,n}(\cdot, \hat{\tau}) - \mathbb{M}_{m,n}(\cdot, {\tau_0})\|_{\infty} > \eta/4\} + P_m\{\sup_{l \ge n} \| \mathbb{M}_{m,l}(\cdot, \tau_0) - M_m\|_{\infty} > \eta/4\}.
\end{eqnarray}
For the first term, notice that, $ \| \mathbb{M}_{m,n}(\cdot, \hat{\tau}) - \mathbb{M}_{m,n}(\cdot, {\tau_0}) \|_{\infty} \leq \max_{i\leq n} |Z_{im}(\hat{\tau}) - Z_{im}(\tau_0) |$. This is bounded above by $\sup_{u \in \R} \left|\Phi \left( u \right)- \Phi\left\{u + \sqrt{m }(\hat{\tau} - \tau_0)\right\}\right|$. As $\sup_{u \in \R} \left|\Phi \left( u \right)- \Phi\left(u + a \right)\right| = 2 \Phi \left( |a|/2 \right)- 1 $, for $a \in \R$, $\| \mathbb{M}_{m,n}(\cdot, \hat{\tau}) - \mathbb{M}_{m,n}(\cdot, {\tau_0})\|_{\infty}$ is bounded by $\{2 \Phi \left( m^{1/2}|\hat{\tau} - \tau_0|/2 \right)- 1\}$, which goes in probability to zero.

To show that the last term in \eqref{eq:bnd1} goes to zero, consider the class of functions $\mathcal{F} \equiv  \{f_d(x,z) \equiv  (z - 1/4) 1(x \le d) | d  \in [0,1]\}$ with the envelope $F(x,z) = 1$. The class $\mathcal{F}$ is formed by multiplying a fixed function $z \mapsto (z - 1/4)$ with a bounded Vapnik-–Chervonenkis classes of functions  $\{1(x \leq d): 0 \leq d \leq 1\}$ and therefore satisfies the entropy condition in the third display on page 168 of \citet{VW96}. It follows that $\mathcal{F}$ satisfies the conditions of Theorem 2.8.1 of \citet{VW96} and is therefore uniformly Glivenko--Cantelli for the class of probability measures $\{P_m\}$, i.e.,
\begin{eqnarray}
\sup_{m \ge 1} P_m\{\sup_{n \ge k} \|\mathbb{M}_{m,n}(\cdot, \tau_0) - M_m(\cdot)\|_{\infty} \nonumber
> \epsilon \} \rightarrow 0
\end{eqnarray}
for every $\epsilon > 0$ as $k \rightarrow \infty$. Thus, we get $P \{ |\hat d_{m,n}(\hat{\tau}) - d_m | > \epsilon\} \rightarrow 0 \mbox{ as } m, n \rightarrow \infty\,$. This completes the proof of the theorem.
\end{proof}

Recall that $T (x, \tau_0) = (n h_n)^{1/2}\{\hat{\mu}(x) - \tau_0\}$.
The following standard result from non-parametric regression theory is useful in proving Theorem~\ref{consistency-theorem2}. The proof follows, for example, from the results in Section 2.2 of \citet{B87}.
\begin{lemma}\label{lm:kerdis}
Assume that $\mu(\cdot)$ and 	$\sigma^2(\cdot)$ is continuous on $[0,1]$.
is continuous on [0,1]. We then have:
\begin{enumerate}
	\item[(i)] For $ 0 < x, y  < d^0$ and $x \neq y$,
\begin{equation}
\left(\begin{array}{c}
T(x, \tau_0)\\
	T(y, \tau_0)
	\end{array}\right) \rightarrow  N \left(
\left(\begin{array}{c}
0\\
0
	\end{array}\right),
\left(\begin{array}{cc}
 \bar{K^2} \sigma^2 (x)/ \{ m f(x)\}  & 0\\
0 & \bar{K^2} \sigma^2(y)/ \{ m f(y)\}
	\end{array}\right) 	\right), \nonumber
\end{equation}
in distribution. 
\item[(ii)] For $d^0 < z < 1$, 
 $	T(z, \tau_0) {\rightarrow} \infty$ in probability.
\end{enumerate}
\end{lemma}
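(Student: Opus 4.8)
The plan is to reduce both parts to standard kernel-regression asymptotics, exploiting the fact that $\mu \equiv \tau_0$ on $[0,d^0]$. Writing $\hat{\mu}(x) = \hat{r}(x)/\hat{f}(x)$, I would first note that $\hat{f}(x) \to f(x)$ in probability at every continuity point of $f$, so by Slutsky's theorem it suffices to analyze the numerator $N_n(x) \equiv (n h_n)^{1/2}\{\hat{r}(x) - \tau_0 \hat{f}(x)\}$ and divide by $f(x)$. Since $\bar{Y}_{i\cdot} - \tau_0 = \{\mu(X_i) - \tau_0\} + \bar{\epsilon}_{i\cdot}$ and $\mu(u) - \tau_0 = 0$ for $u \le d^0$, for $x$ strictly inside $(0,d^0)$ the bias contribution is eventually zero (compactly supported $K$) or negligible (exponential tails), so $N_n(x)$ is asymptotically the centered average $(n h_n)^{-1/2}\sum_i \bar{\epsilon}_{i\cdot}\, K\{h_n^{-1}(x - X_i)\}$.

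Second, I would treat $W_{ni} = (n h_n)^{-1/2}\bar{\epsilon}_{i\cdot}\, K\{h_n^{-1}(x - X_i)\}$ as a row-independent, mean-zero triangular array and compute its limiting variance. A change of variables $u = x - h_n t$ together with continuity of $\sigma^2$ and $f$ (assumption (a)) gives $\mathrm{var}\{\sum_i W_{ni}\} \to \bar{K^2}\sigma^2(x) f(x)/m$; dividing by $f(x)$ from the denominator then yields the stated marginal variance $\bar{K^2}\sigma^2(x)/\{m f(x)\}$. Asymptotic normality follows from the Lindeberg--Feller theorem, whose Lindeberg condition I would verify using the uniform integrability supplied by the $(2+\eta)$-th moment bound $\sigma^{(2+\eta)}$ in assumption (a) and the tail/integrability control on $K$ in assumption (b); these are precisely the ingredients invoked in Section 2.2 of \citet{B87}.

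Third, for the joint statement I would invoke the Cram\'er--Wold device: for scalars $a,b$, the linear combination $a\, T(x,\tau_0) + b\, T(y,\tau_0)$ is again a normalized sum of a mean-zero triangular array, to which the same Lindeberg--Feller argument applies. The crucial point is that the limiting covariance of $T(x,\tau_0)$ and $T(y,\tau_0)$ is zero: the cross term is governed by $E[\bar{\epsilon}_{1\cdot}^2\, K\{h_n^{-1}(x - X_1)\}\,K\{h_n^{-1}(y - X_1)\}]$, and since $x \neq y$ are fixed while $h_n \to 0$, the two kernel factors have eventually disjoint (or exponentially negligible) support, so this expectation is $o(h_n)$. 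Hence the limiting variance of the linear combination is $a^2 V^2(x) + b^2 V^2(y)$ with no cross term, giving joint convergence to independent normals.

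Finally, part (ii) is immediate from consistency: at the continuity point $z > d^0$ one has $\hat{\mu}(z) \to \mu(z) > \tau_0$ in probability, while $(n h_n)^{1/2} = c^{1/2} n^{(1-\beta)/2} \to \infty$ since $\beta \in (0,1)$; thus for any $M$, $P\{T(z,\tau_0) > M\} = P\{\hat{\mu}(z) - \tau_0 > M (n h_n)^{-1/2}\} \to 1$. I expect the main obstacle to be the bookkeeping for the triangular-array central limit theorem --- namely verifying the Lindeberg condition and controlling the bias near the boundary of $[0,d^0]$ under the two kernel regimes in assumption (b) --- rather than any conceptual difficulty, which is why the statement is quoted as standard and deferred to \citet{B87}.
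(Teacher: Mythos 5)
Your proposal is correct and follows essentially the same route as the paper, which gives no self-contained proof but defers to exactly this standard argument in Section 2.2 of \citet{B87}: Slutsky reduction to the kernel-weighted numerator, vanishing bias on the flat stretch under the two kernel regimes of assumption (b), a Lindeberg--Feller triangular-array CLT using the $(2+\eta)$-moment condition, Cram\'er--Wold with asymptotically vanishing cross-covariance for the joint statement, and divergence of $(n h_n)^{1/2}\{\mu(z)-\tau_0\}$ for part (ii). The only cosmetic refinement: since assumption (b) controls tail probabilities of $K$ rather than $K$ pointwise, the cross term $\int K(t)\,K\{t+(y-x)/h_n\}\,dt$ is most cleanly killed by Cauchy--Schwarz together with $\bar{K^2}<\infty$, rather than by a disjoint-support argument.
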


{\it Proof of Theorem~\ref{consistency-theorem2}.}
Let $\nu(x)$ and  ${M} (d)$ be as defined in proof of Theorem~\ref{consistency-theorem}, e.g., $\nu (x) = (1/2) 1(x \leq d^0)$. For notational convenience, let $Z_i(\tau_0) = \tilde{p}_n(X_i) = 1 - \Phi \{T(X_i, \tau_0)\}$.  We eventually show that $\| \M_n (\cdot, \hat{\tau}) - M (\cdot) \|_\infty$ converges to 0 in probability and then apply argmax continuous mapping theorem to prove consistency. By calculations similar to those in the proof of Theorem~\ref{consistency-theorem}, $ \| \M_n (\cdot, \hat{\tau})  -  \M_n (\cdot, {\tau_0}) \|   \leq  \{2 \Phi \left( {(n h_n)^{1/2} }|\hat{\tau} - \tau_0|/2 \right)- 1\}$, which converges to 0 in probability.
So, it suffices to show that $\| \M_n (\cdot, {\tau_0}) - M (\cdot) \|_\infty$ converges to 0 in probability. We first establish marginal convergence. We have
\begin{eqnarray}\label{eq:marcon}
\lefteqn{E  \left[ \left. \Phi\{ T(X_1, \tau_0) \}\right| X_1 = x \right] } \hspace{.0in}\\ \nonumber
& =&  E \left[ \Phi\left\{ \frac{(n h_n)^{-1/2}\left[\{\mu(x) - \tau_0 + \epsilon_1\} K(0) + \sum_{i = 2}^n (Y_i - \tau_0) K\left\{h_n^{-1}{(x - X_i)} \right\}\right]}{{{(n h_n)^{-1}}}\left[K(0) + \sum_{i = 2}^n K\left\{ h_n^{-1} {(x - X_i)}\right\}\right]}\right\}\right]. \hspace{.3in}
\end{eqnarray}
The first term, both in the numerator and the denominator of the argument, is asymptotically negligible and thus, the expression in \eqref{eq:marcon} equals $E [ \Phi\{ T(x, \tau_0) + o_p(1)\}]$. Using Lemma~\ref{lm:kerdis}, this converges to $1 - \nu(x)$, by definition of weak convergence.
As $Z_i(\tau_0) = 1 - \Phi\{T(X_i, \tau_0)\}$, we get $E \left\{ \M_{n} (d, {\tau_0}) \right\} = E [ E\, \{Z_1(\tau_0) - 0.25\} 1(X_1 \leq d) | X_1 ]$ which converges to $M(d)$.
Further, $ \mbox{var} \{\M_n(d, \tau_0)\}  = {n}^{-1} \mbox{var}\left[\{ Z_1(\tau_0) - 0.25\} 1(X_1 \leq d) \right] + {n^{-1}(n-1)} \mbox{cov}\left[ \{Z_1(\tau_0) - 0.25\} 1(X_1 \leq d) , \{Z_2(\tau_0) - 0.25\} 1(X_2 \leq d) \right].$ The first term in this expression goes to zero as $|Z_1(\tau_0)| \leq 1$. For $y \neq x$, by calculations similar to \eqref{eq:marcon},
$ E \left\{ \left. Z_1(\tau_0) Z_2(\tau_0) \right| X_1 = x, X_2 = y \right\} 
 =  E \left[ \Phi\left\{T(x, \tau_0) + o_p(1) \right\} \Phi\left\{T(y, \tau_0) + o_p(1) \right\} \right]$.
Using Lemma~\ref{lm:kerdis}, $T(x, \tau_0)$ and $T(y, \tau_0)$ are asymptotically independent. Thus, by taking iterated expectations, it can be shown that $\mbox{cov}\left[ \{Z_1(\tau_0) - 0.25\} 1(X_1 \leq d) , \{Z_2(\tau_0) - 0.25\} 1(X_2 \leq d) \right]\rightarrow 0$. This justifies pointwise convergence, e.g., $\M_n(d, \hat{\tau_0}) - M(d) = o_p(1)$, for $d \in [0,1]$. Further, as $|Z_{i }(\hat{\tau}) - 1/4| \leq 1$, for $d_1 < d < d_2$, we have
\begin{eqnarray*}
\lefteqn{E \left[ |\{{\M}_n(d, \tau_0) - {\M}_n(d_1, \tau_0)\} \{{\M}_n(d_2, \tau_0) - {\M}_n(d, \tau_0)\} | \right]} \hspace{1.7in}\\
&  \leq&   E \left[ \left\{ \frac{1}{n} \sum_{i=1}^n 1( X_i \in (d_1, d] )\right\} \left\{ \frac{1}{n} \sum_{i=1}^n 1(X_i \in (d, d_2]) \right\}\right].
\end{eqnarray*}
The above two terms, under expectation, are independent and thus, the expression is bounded by
$ \left\|f \right\|_{\infty}^2 (d-d_1)(d_2 - d) \leq  \left\|f \right\|_{\infty}^2 (d_2 - d_1)^2$. As $f$ is continuous on $[0,1]$,  $\left\|f \right\|_{\infty} < \infty$. 
Thus, the processes $ \{ {\M}_n(\cdot, \tau_0) \}_{n\geq 1}$ are tight in $D [0,1]$ using Theorem 15.6 in \citet{B68}. So, $ {\M}_n(\cdot, \tau_0)$ converges weakly to $M$ as processes in $D[0,1]$. As the limiting process is degenerate and the map $x(\cdot) \mapsto \sup_{d \in [0,1]} |x(d)|$ is continuous, by continuous mapping, we get $\| {\M}_n(\cdot, \tau_0) - M (\cdot) \|$ converges in probability to zero.
As $d^0$ is the unique maximizer of the continuous function ${{M(\cdot)}}$ and $\hat{d}_{n}(\hat{\tau})$ is tight as $\hat{d}_n(\hat{\tau}) \in [0,1]$.
Hence, by argmax continuous mapping theorem in \cite{VW96}, we get the result.
\qedo

\end{document}